
\documentclass[11pt]{article}

\usepackage{times}
\usepackage{fullpage,amsmath,amsthm}
\usepackage{amssymb,latexsym,graphicx}

\usepackage{url}


\newcommand{\sgn}{\mathrm{sgn}}
\newcommand{\dist}{\Delta}

\def\01{\{0,1\}}

\newcommand{\R}{\ensuremath{\mathbb{R}}}

\newcommand{\Sphere}{\ensuremath{\mathbb{S}}}

\newcommand{\ang}[1]{\langle #1 \rangle}
\newtheorem{theorem}{Theorem}

\newtheorem{proposition}[theorem]{Proposition}
\newtheorem{lemma}[theorem]{Lemma}

\newtheorem{claim}[theorem]{Claim}
\newtheorem{fact}[theorem]{Fact}
\newtheorem{corollary}[theorem]{Corollary}

\newcommand{\eps}{\varepsilon}

\newcommand{\ignore}[1]{}

\bibliographystyle{alpha}

\newcommand{\gap}{\gamma}
\newcommand{\GHD}[2]{\textsc{GHD}_{#1,#2}}
\newcommand{\GHS}[1]{\ensuremath{\mathcal{GHD}_{#1}}}

\newcommand{\deq}{:=}

\begin{document}

\title{Better Gap-Hamming Lower Bounds via Better Round Elimination}
\author{Joshua Brody\thanks{Department of Computer Science, Dartmouth College, Hanover, NH 03755. Supported in part by NSF Grant CCF-0448277. Part of this work was
done while the author was visiting CWI and Tel Aviv University.}
\and Amit Chakrabarti\thanks{Department of Computer Science, Dartmouth College, Hanover, NH 03755. Supported in part by NSF Grants CCF-0448277 and IIS-0916565 and a McLane
Family Fellowship.}
\and Oded Regev\thanks{Blavatnik School of Computer Science, Tel Aviv University, Tel Aviv 69978, Israel. Supported by the Israel Science Foundation, by the European Commission under the Integrated Project QAP funded by the IST directorate as Contract Number 015848, by the Wolfson Family Charitable Trust, and by a European Research Council (ERC) Starting Grant.}
\and Thomas Vidick\thanks{UC Berkeley, vidick@eecs.berkeley.edu. Supported by ARO Grant W911NF-09-1-0440 and NSF Grant CCF-0905626. Part of this work was done while the author was visiting CWI and Tel Aviv University.}
\and Ronald de Wolf\thanks{CWI Amsterdam, rdewolf@cwi.nl. Supported by a Vidi grant from the Netherlands Organization for Scientific Research (NWO).}
}
\date{}
\maketitle

\begin{abstract}
Gap Hamming Distance is a well-studied problem in communication
complexity, in which Alice and Bob have to decide whether the Hamming
distance between their respective $n$-bit inputs is less than
$n/2-\sqrt{n}$ or greater than $n/2+\sqrt{n}$.  We show that every
$k$-round bounded-error communication protocol for this problem sends
a message of at least $\Omega(n/(k^2\log k))$ bits.  This lower bound
has an exponentially better dependence on the number of rounds than
the previous best bound, due to Brody and Chakrabarti.  Our
communication lower bound implies strong space lower bounds on algorithms
for a number of data stream computations, such as 
approximating the number of distinct elements in a stream.  

Subsequent to this result, the bound has been improved by some of us to the optimal $\Omega(n)$, independent of $k$, by using different techniques.
\end{abstract}

\section{Introduction}

\subsection{The communication complexity of the Gap-Hamming problem}

Communication complexity studies the communication requirements of
distributed computing.  In its simplest and best-studied setting, two
players, Alice and Bob, receive inputs $x$ and $y$, respectively, and
are required to compute some function $f(x,y)$.  Clearly, for most
functions $f$, the two players need to communicate to solve this
problem.  The basic question of communication complexity is the
\emph{minimal amount} of communication needed.  By abstracting away
from the resources of local computation time and space, communication
complexity gives us a bare-bones but elegant model of distributed
computing.  It is useful and interesting for its own sake, but also
one of our main sources of lower bounds in many other models of
computation, such as data structures, circuit size and depth, Turing
machines, VLSI, and algorithms for data streams.  The basic results
are excellently covered in the book of Kushilevitz and
Nisan~\cite{kushilevitz&nisan}, but many more fundamental results have
appeared since its publication in 1997.

One of the few basic problems whose randomized communication
complexity is not yet well-understood, is the \emph{Gap Hamming
  Distance} (GHD) problem, defined as follows.
\begin{quote}
GHD: Alice receives input $x\in\01^n$ and Bob receives input
$y\in\01^n$, with the promise that $|\Delta(x,y)-n/2| \geq \sqrt{n}$,
where $\Delta$ denotes Hamming distance.  Decide whether $\Delta(x,y)
< n/2$ or $\Delta(x,y) > n/2$.
\end{quote}
Mind the gap between $n/2-\sqrt{n}$ and $n/2+\sqrt{n}$, which is what
makes this problem interesting and useful. We will be concerned with
the communication complexity of randomized protocols that solve GHD. A
gap size of $\Theta(\sqrt{n})$ is the natural choice -- it is where a
$\Theta(1)$ fraction of the inputs lie inside the promise area, and as
we'll see below, it is precisely this choice of gap size that has strong
implications for streaming algorithms lower bounds. Moreover, understanding
 the complexity of the $\sqrt{n}$-gap version can be shown to imply a 
 complete understanding of the GHD problem for all gaps.
  The communication
complexity of the gapless version, where there is no promise on the
inputs, can easily be seen to be linear (for instance by a reduction
from disjointness).  The gap makes the problem easier, and the
question is how it affects the communication complexity: does it
remain linear?

Protocols for GHD and more general problems can be obtained by
sampling.  Suppose for instance that either
$\Delta(x,y)\leq(1/2-\gamma)n$ or $\Delta(x,y)\geq(1/2+\gamma)n$.
Choosing an index $i\in[n]$ at random, the predicate $[x_i\neq y_i]$
is a coin flip with heads probability $\leq 1/2-\gamma$ in the first
case and $\geq 1/2+\gamma$ in the second.  It is known that flipping
such a coin $O(1/\gamma^2)$ times suffices to distinguish these two
cases with probability at least $2/3$.  If we use shared randomness to
choose $O(1/\gamma^2)$ indices, we obtain a one-round bounded-error
protocol with communication $O(1/\gamma^2)$ bits.  In
particular, for GHD (where $\gamma=1/\sqrt{n}$), the communication is
$O(n)$ bits, which is no better than the trivial upper bound of $n$
when Alice just sends $x$ to Bob.

What about lower bounds?  Indyk and Woodruff~\cite{indyk&woodruff}
managed to prove a linear lower bound for the case of one-round
protocols for GHD, where there is only one message from Alice to Bob
(see also~\cite{woodruff,Jayram-Kumar-Sivakumar}).  However, going
beyond one-round bounds turned out to be quite a difficult problem.
Recently, Brody and Chakrabarti~\cite{brody&chakrabarti:gh} obtained
linear lower bounds for all \emph{constant}-round protocols:
\begin{theorem}
\cite{brody&chakrabarti:gh} Every $k$-round bounded-error protocol for
GHD communicates at least $\displaystyle \frac{n}{2^{O(k^2)}}$ bits.
\end{theorem}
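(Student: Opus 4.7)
The plan is to prove this by iterated round elimination on a natural parametrized family of gap-Hamming problems. Introduce $\GHD{N}{g}$, asking to distinguish $\dist(x,y) < N/2 - g$ from $\dist(x,y) > N/2 + g$ on inputs in $\01^N$; the theorem concerns $\GHD{n}{\sqrt{n}}$. The goal is to show inductively that any $k$-round bounded-error protocol for $\GHD{N}{g}$ with total communication $c$ can be converted into a $(k-1)$-round protocol for some $\GHD{N'}{g'}$, with $N' \leq N$ and $g' \leq g$, of communication at most $c$. Iterating $k$ times and comparing to the trivial impossibility of a $0$-round protocol on a nontrivially-gapped instance will then yield the claimed lower bound.

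The heart of the argument is a one-round elimination lemma. Let Alice's first message $M$ have length $\ell$. Intuitively, $\ell$ bits can significantly skew the posterior of only about $\ell$ coordinates of Alice's input, so for a random subset $S \subseteq [N]$ of appropriate size, $M$ is nearly independent of $x_S$ conditioned on $x_{[N] \setminus S}$; this follows by the chain rule for mutual information together with a Markov/averaging step over $S$. The reduction then embeds a $\GHD{N'}{g'}$ instance $(x',y')$ into the $S$-coordinates of a full $\GHD{N}{g}$ instance, filling the complementary coordinates with shared public randomness, which makes $\dist(x,y) - \dist(x',y')$ a concentrated binomial around $(N-N')/2$. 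Bob simulates the first message from the near-unconditional distribution of $M$ without needing Alice's $S$-coordinates, after which the players execute the remaining $k-1$ rounds honestly on the embedded instance.

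Tracking how the parameters degrade is what produces the $2^{O(k^2)}$ factor. Because the random filler contributes Gaussian fluctuations of size $\Theta(\sqrt{N-N'})$ to the Hamming distance, preserving $g' = \Omega(\sqrt{N'})$ forces $N-N'$ to be only a $2^{-\Theta(k)}$ fraction of $N$ per round (the original gap $\sqrt{n}$ is already small and must survive $k$ rounds of erosion). Over $k$ iterations the input therefore shrinks by a factor of at most $2^{\Theta(k^2)}$, while the gap stays $\Omega(\sqrt{N'})$ throughout. Since no $0$-round protocol can solve a nontrivially-gapped GHD with bounded error, some message in the original protocol must carry $\Omega(N / 2^{O(k^2)})$ bits, giving the claimed lower bound on the total communication.

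The main obstacle lies in the round elimination step itself: enlarging $S$ (so that each round eliminates many coordinates) weakens the statistical-distance guarantee on $M$, forcing one to tolerate larger slack and thereby shrinking the usable gap. Balancing these two competing pressures tightly enough to incur only $2^{O(k)}$ loss per round -- as opposed to, say, a polynomial factor -- is the subtle technical step, and it is precisely this bottleneck that the present paper is designed to overcome in order to obtain the improved $\Omega(n / (k^2 \log k))$ bound.
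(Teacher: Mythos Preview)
The paper does not prove this theorem; it is a citation of the result of Brody and Chakrabarti~\cite{brody&chakrabarti:gh}, stated here only as prior work. There is therefore no ``paper's own proof'' to compare your proposal against. What the paper does contain is a short description of the Brody--Chakrabarti technique in the ``Related work'' paragraph, and your proposal departs from that description in a substantive way.

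According to the paper, the Brody--Chakrabarti argument is combinatorial: one fixes the first message, obtaining a large set $A$ of Alice-inputs, and then uses the largeness of $A$ to find roughly $n/3$ coordinates on which \emph{any} $(n/3)$-bit string can be lifted to a full input lying in $A$. The first message is then redundant and the input length drops by a constant factor. Your proposal instead follows an information-theoretic route (chain rule for mutual information over a random coordinate subset $S$, with Bob sampling the first message from its near-marginal distribution, and complementary coordinates filled with public randomness). That is a different mechanism from the lifting argument, and it is not what the cited proof does.

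Separately, your parameter accounting does not hold together as written. You assert that the random filler contributes fluctuations of order $\sqrt{N-N'}$ and that keeping $g' = \Omega(\sqrt{N'})$ ``forces $N-N'$ to be only a $2^{-\Theta(k)}$ fraction of $N$ per round,'' and that this yields a total shrinkage of $2^{\Theta(k^2)}$. But if $N - N' = 2^{-\Theta(k)} N$ per round, then after $k$ rounds $N$ has barely changed, not shrunk by $2^{\Theta(k^2)}$; conversely, if you meant $N' = 2^{-\Theta(k)} N$, you have not explained why such drastic per-round shrinkage is necessary, since controlling $\sqrt{N-N'}$ against the gap only requires removing a \emph{small} fraction of coordinates. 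In the actual Brody--Chakrabarti argument the $2^{O(k^2)}$ loss arises from the interaction between the constant-factor input shrinkage and the need to keep the accumulated error below a fixed constant over all $k$ rounds, a trade-off your sketch does not correctly isolate.
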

In fact their bound is significant as long as the number of rounds is
$k\leq c_0\sqrt{\log n}$, for a universal constant $c_0$.  Regarding
lower bounds that hold irrespective of the number of rounds, an easy
reduction gives an $\Omega(\sqrt{n})$ lower bound (which is folklore):
take an instance of the gapless version of the problem on
$x,y\in\01^{\sqrt{n}}$ and ``repeat'' $x$ and $y$ $\sqrt{n}$ times each.  This
blows up the gap from 1 to $\sqrt{n}$, giving an instance of GHD on
$n$ bits.  Solving this $n$-bit instance of GHD solves the
$\sqrt{n}$-bit instance of the gapless problem.  Since we have a
linear lower bound for the latter, we obtain a general
$\Omega(\sqrt{n})$ bound for GHD.%
\footnote{In fact the same proof lower-bounds the \emph{quantum}
  communication complexity; a linear quantum lower bound for the
  gapless version follows easily from Razborov's work~\cite{razborov}
  and the observation that $\Delta(x,y)=|x|+|y|-2|x\wedge y|$.
  However, as Brody and Chakrabarti observed, in the quantum case this
  $\sqrt{n}$ lower bound is essentially tight: there is a
  bounded-error quantum protocol that communicates $O(\sqrt{n}\log n)$
  qubits.  This also implies that lower bound techniques which apply
  to quantum protocols, such as discrepancy, factorization
  norms~\cite{linial&shraibman,Lee&shraibman}, and the pattern matrix
  method~\cite{sherstov}, cannot prove better bounds for classical
  protocols.}

\subsection{Our results}

Our main result is an improvement of the bound of Brody and Chakrabarti, with an exponentially better dependence
on the number of rounds:
\begin{theorem} \label{thm:main-informal}
Every $k$-round bounded-error protocol for GHD sends a message of length $\displaystyle\Omega\left(\frac{n}{k^2\log k}\right)$.
\end{theorem}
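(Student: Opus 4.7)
The plan is to prove Theorem~\ref{thm:main-informal} by a careful round-elimination argument applied to a parametrized family of Gap-Hamming problems. I would introduce a generalized problem $\GHSd{n}{t}$ with input size $n$, threshold $t$, and gap $\Theta(\sqrt{n})$, fix a hard input distribution $\mu_{n,t}$, and then establish a single-round-elimination lemma of the following shape: if $\GHSd{n}{t}$ admits a $k$-round protocol whose longest message has length $c$ and whose error under $\mu_{n,t}$ is $\eps$, then $\GHSd{n'}{t'}$ admits a $(k-1)$-round protocol of message length $c$ and error $\eps + O(1/k)$ under $\mu_{n',t'}$, where $n'=n-\Delta n$. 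Iterating the lemma $k$ times collapses the protocol to a $0$-round one, which must have error $\geq 1/2 - o(1)$ whenever its input length $n^{(k)}$ is still $\Omega(1)$; balancing the error budget $k\cdot O(1/k)$ against constant error, and the shrinkage $k\cdot\Delta n$ against $n$, yields the claimed bound on $c$.

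\textbf{Hard distribution and the single-round reduction.} A natural choice for $\mu_{n,t}$ is the distribution in which $x\in\01^n$ is uniform and $y$ is obtained from $x$ by flipping each bit independently with probability $p = 1/2 + \Theta(1/\sqrt{n})$ or $p = 1/2 - \Theta(1/\sqrt{n})$, with equal probability; this places $\Delta(x,y)$ on either side of the gap essentially by the sign of a fair coin. Suppose Alice speaks first, with a message of length $c$. By averaging, I can fix a message value $m$ such that the conditional error remains close to $\eps$ while the compatible set of Alice's inputs has density $\geq 2^{-c}$ in $\01^n$. The goal of the reduction is to identify a coordinate subset $T\subseteq[n]$ of size $n'=n-\Delta n$ and a restriction $\rho\in\01^{[n]\setminus T}$ such that, conditioned on $x_{[n]\setminus T}=\rho$ and on $M=m$, the marginal distribution of $x_T$ is close enough to uniform to embed a fresh instance of $\GHSd{n'}{t'}$; the remaining $k-1$ rounds of the original protocol then solve this new instance, with the roles of Alice and Bob swapped.

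\textbf{Key technical step and main obstacle.} The heart of the argument --- and the place where an improvement over Brody--Chakrabarti is needed --- is the quantitative bound on $\Delta n$. The difficulty is that conditioning on a $c$-bit message can, in principle, bias many coordinates of $x$ by small amounts; even when each individual bias is tiny, collectively they might shift $\Delta(x,y)$ by enough to destroy the embedding into $\GHSd{n'}{t'}$. The plan is to argue that such biases cannot be too concentrated: if a large set of coordinates were each biased by $\Omega(1/\sqrt{n'})$, the message $m$ would be carrying more than its $c$ bits' worth of information. A hypercontractivity-style or second-moment estimate should allow me to discard only $\Delta n = \tilde O(c\cdot k\log k)$ coordinates --- those with the largest biases --- and to certify that on the remaining coordinates the conditional distribution is $O(1/k)$-close to uniform in a sense strong enough to preserve Hamming-distance anti-concentration at the $\sqrt{n'}$ scale. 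The $\log k$ factor enters because the affordable slack in error per round is only $O(1/k)$, so the tail bounds applied to the biases must be sharp out to roughly $\sqrt{k\log k}$ standard deviations; the $k$ factor enters because one must tolerate accumulation over all $k$ rounds.

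\textbf{Bookkeeping and conclusion.} Assuming the single-round lemma with $\Delta n = \tilde O(ck\log k)$ is established, iterating it $k$ times requires checking that (i) the thresholds $t^{(i)}$ remain within the legitimate gap regime for $n^{(i)}$, and (ii) the errors $\eps^{(i)}$ stay bounded away from $1/2$; both are linear-accumulation bookkeeping steps. The final $0$-round protocol solves $\GHSd{n^{(k)}}{t^{(k)}}$ on a nontrivial distribution with constant error, which is impossible once $n^{(k)}=\Omega(1)$; hence the total shrinkage $k\cdot\Delta n$ must exceed $n-\Omega(1)$, giving $c=\Omega(n/(k^2\log k))$ as claimed. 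I expect the most delicate part of the proof to be proving the quantitative concentration statement at the heart of the single-round reduction --- in particular, pinning down the right coordinate-discarding scheme so that the residual distribution really does look uniform to the next round of the protocol --- while the rest of the argument is structural round elimination combined with standard parameter bookkeeping.
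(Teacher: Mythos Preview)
Your proposed approach has a genuine gap at exactly the point you flag as ``the most delicate part.'' The coordinate-discarding scheme cannot achieve $\Delta n = \tilde O(c\,k\log k)$: a $c$-bit message can induce biases that are spread \emph{uniformly} over all $n$ coordinates, so removing any small subset does nothing. Concretely, take $A=\{x:|x|\le n/2-\sqrt{cn}\}$, which has density $\approx 2^{-c}$; conditioned on $x\in A$, every coordinate is biased toward $0$ by $\Theta(\sqrt{c/n})$, and the marginal on any $n'$ coordinates with $n'=n-o(n)$ is still $\Omega(1)$-far from uniform in the sense that matters (the accumulated shift in Hamming weight is $\Theta(\sqrt{cn})\gg\sqrt{n}/k$). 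Neither hypercontractivity nor a second-moment argument distinguishes ``bad'' coordinates here, because there are none: all coordinates are equally and mildly biased. This is precisely why the Brody--Chakrabarti reduction loses a constant factor in $n$ per round, and your proposal does not supply a new idea that would let you do better within the shrink-the-input framework.

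The paper's proof avoids this obstacle by abandoning input shrinkage altogether. It works on the sphere (unit vectors, inner-product gap $\gamma=\Theta(1/\sqrt n)$), fixes the most popular first message, and then has Alice \emph{move} her input $\tilde x$ to the nearest point $x$ in the constant-message set $A$; concentration of measure on $\Sphere^{n-1}$ guarantees $\|x-\tilde x\|$ is small with high probability, and a short geometric calculation shows that this small movement flips $\sgn(x\cdot y)$ only with probability $O(1/k)$. Thus the input dimension never changes; only the error grows, by $O(1/k)$ per round, and iterating $k$ times gives the $\Omega(n/(k^2\log k))$ bound. The key conceptual difference is geometric (closeness of a random point to any large set) rather than information-theoretic (near-uniformity of a conditional marginal), and it is exactly this switch that buys the exponential improvement over Brody--Chakrabarti.
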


In fact we get a bound for the more general problem of distinguishing distance 
$\Delta(x,y)\leq(1/2-\gamma)n$ from $\Delta(x,y)\geq(1/2+\gamma)n$, as long as $\gamma=\Omega(1/\sqrt{n})$:
for this problem every $k$-round protocol sends a message of $\Omega\left(\frac{1}{k^2\log k}\frac{1}{\gamma^2}\right)$ bits.

Like the result of~\cite{brody&chakrabarti:gh}, our lower bound
deteriorates with the number of rounds.  Also like their result, our
proof is based on \emph{round elimination}, an important
framework for proving communication lower bounds.
Our proof contains an important insight into
this framework that we now explain.

A communication problem usually involves a number of parameters, such as 
the input size, an error bound, and in our case the gap size. 
The round elimination framework consists of showing that a $k$-round
protocol solving a communication problem for a class $\mathcal{C}$
of parameters can be turned into a $(k-1)$-round protocol for an easier class
$\mathcal{C'}$, provided the message communicated in the first round is
short. This fact is then applied repeatedly to obtain a $0$-round
protocol (say), for some nontrivial class of instances. The resulting
contradiction can then be recast as a communication lower bound.
Historically, the easier class $\mathcal{C'}$ has contained {\em smaller
input sizes}\footnote{In fact, the classes
$\mathcal{C}$ and $\mathcal{C'}$ are often designed in such a way that
an instance in $\mathcal{C}$ is a ``direct sum'' of several independent
instances in $\mathcal{C'}$} than those in $\mathcal{C}$. 

In contrast to previous applications of round elimination, we manage
to \emph{avoid shrinking the input length}: the simplification will
instead come from a slight deterioration in the error parameter. Here is how
this works.  If Alice's first message is short, then there is a
specific message and a large set $A$ of inputs on which Alice would
have sent that message.  Roughly speaking, we can use the largeness of
$A$ to show that \emph{almost any} input $\tilde{x}$ for Alice is
close to $A$ in Hamming distance. Therefore, Alice can ``move''
$\tilde{x}$ to its nearest neighbor, $x$, in $A$: this make her first
message redundant, as it is constant for all inputs $x\in A$.  Since
$x$ and $\tilde{x}$ have small Hamming distance, it is likely that
both pairs $(\tilde{x},y)$ and $(x,y)$ are on the same side of the
gap, i.e.  have the same GHD value. Hence the correctness of the new
protocol, which is one round shorter, is only mildly affected by the
move.  Eliminating all $k$ rounds in this manner, while carefully
keeping track of the accumulating errors, yields a lower bound of
$\Omega(n/(k^4\log^2 k))$ on the maximum message length of any
$k$-round bounded-error protocol for GHD.

Notice that this lower bound is slightly weaker than the above-stated
bound of $\Omega(n/(k^2\log k))$. To obtain the stronger bound, we
leave the purely combinatorial setting 
and analyze a version of GHD \emph{on the sphere}:%
\footnote{The idea of
going to the sphere was also used by Jayram et
al.~\cite{Jayram-Kumar-Sivakumar} for a simplified one-round lower
bound. As we will see in Section~\ref{sec:preliminaries}, doing so
is perhaps even more natural than working with the combinatorial version; in particular
it is then easy to make GHD into a \emph{dimension-independent} problem.}
  Alice's input is a unit vector $x\in\mathbb{R}^n$ and Bob's
input is a unit vector $y\in\mathbb{R}^n$, with the promise that either
$x\cdot y\geq 1/\sqrt{n}$ or $x\cdot y\leq -1/\sqrt{n}$ (as we show
below in Section~\ref{sec:preliminaries}, this version and the Boolean
one are essentially equivalent in terms of communication complexity).
Alice's input is now close to the large, constant-message set $A$ in
\emph{Euclidean distance}. The rest of the proof is as outlined above,
but the final bound is stronger than in the combinatorial proof for
reasons that are discussed in Section~\ref{sec:measconc}.  Although this
proof uses arguments from high-dimensional geometry, such as measure
concentration, it arguably remains conceptually simpler than the one
in~\cite{brody&chakrabarti:gh}. 


\paragraph{Related work.} The round elimination technique was formally
identified and named in Miltersen et al.~\cite{MiltersenNSW98} and dates
back even further, at least to Ajtai's lower bound for predecessor data
structures~\cite{Ajtai88}.
For us, the most relevant previous use of this technique is in the result by
Brody and Chakrabarti~\cite{brody&chakrabarti:gh}, where a
weaker lower-bound is proved on GHD.

Their proof, as ours, identifies a large subset $A$ of inputs on which Alice
sends the same message. The ``largeness" of $A$
is used to identify a suitable subset of $(n/3)$ coordinates such that
Alice can ``lift'' any $(n/3)$-bit input $\tilde{x}$, defined on these
coordinates, to some $n$-bit input $x \in A$. In the resulting
protocol for $(n/3)$-bit inputs, the first message is now constant,
hence redundant, and can be eliminated. 

The input size thus shrinks from $n$ to $n/3$ in one round elimination
step. As a result of this constant-factor shrinkage, the
Brody-Chakrabarti final lower bound necessarily decays exponentially
with the number of rounds.  Our proof crucially avoids this shrinkage
of input size by instead considering the {\em geometry} of the set
$A$, and exploiting the natural invariance of the GHD predicate to
small perturbations of the inputs.

\paragraph{Remark.}
This round elimination result was obtained in July 2009.
Soon after, in August 2009, the bound was actually improved by some us of to the optimal $\Omega(n)$ independent of the number of rounds, see~\cite{paper2}.
However, the techniques used are completely different, and as such we feel our result and its proof are of independent interest.

\subsection{Applications to streaming}

The study of gapped versions of the Hamming distance problem by Indyk
and Woodruff~\cite{indyk&woodruff} was motivated by the streaming
model of computation, in particular the problem of approximating the
number of distinct elements in a data stream.  For many data stream
problems, including the distinct elements problem, the goal is to
output a multiplicative approximation of some real-valued quantity.  Usually, both
\emph{randomization} and \emph{approximation} are required.
When both are allowed, there are often remarkably space-efficient
solutions.  

As Indyk and Woodruff showed, \emph{communication lower
  bounds} for gapped versions of the Hamming distance problem imply
\emph{space lower bounds} on algorithms that output the number of
distinct elements in a data stream up to a multiplicative approximation factor
$1\pm\gamma$.
The reduction from the gapped version of Hamming distance works as
follows.  Alice converts her $n$-bit string $x = x_1x_2\cdots x_n$
into a stream of tuples $\sigma = \ang{(1,x_1), (2,x_2), \ldots,
  (n,x_n)}$.  Bob converts $y$ into $\tau = \ang{(1,y_1), (2,y_2),
  \ldots, (n,y_n)}$ in a similar fashion.  Using a streaming algorithm
for the distinct elements problem, Alice processes $\sigma$ and sends
the memory contents to Bob, who then processes $\tau$ starting from
where Alice left off.  In this way, they estimate the number of
distinct elements in $\sigma \circ \tau$.  Note that each element in
$\sigma$ is unique, and that elements in $\tau$ are distinct from
elements in $\sigma$ precisely when $x_i \neq y_i$.  Hence, an
accurate approximation ($\gamma = \Omega(1/\sqrt{n})$ is required) for the
number of distinct elements in $\sigma \circ \tau$ gives an answer to
GHD.  This reduction can be extended to multi-pass streaming
algorithms in a natural way: when Bob is finished processing $\tau$,
he sends the memory contents back to Alice, who begins processing
$\sigma$ a second time.  Generalizing, it is easy to see that a
$p$-pass streaming algorithm gives a $(2p-1)$-round communication
protocol, where each message is the memory contents of the streaming
algorithm.  Accordingly, a lower bound on the length of the largest
message of $(2p-1)$-round protocols gives a space lower bound for the
$p$-pass streaming algorithm.

Thus, the one-round linear lower bound by Indyk and
Woodruff~\cite{indyk&woodruff} yields the desired $\Omega(1/\gamma^2)$
(one-pass) space lower bound for the streaming problem.  Similarly,
our new communication lower bounds imply
$\displaystyle\Omega(1/(\gamma^2 p^2 \log p))$ space lower bounds for
$p$-pass algorithms for the streaming problem.  This bound is
$\Omega(1/\gamma^{2-o(1)})$ for all $p = n^{o(1)}$ and improves on
previous bounds for all $p = o(n^{1/4}/\sqrt{\log n})$.

\paragraph{Organization of the paper.} We start with some preliminaries
in Section~\ref{sec:preliminaries}, including a discussion of the key
measure concentration results that we will use, both for the sphere and
for the Hamming cube, in Section~\ref{sec:measconc}. In
Section~\ref{sec:main} we prove our main result, while in
Section~\ref{sec:combinatorial} we give the simple combinatorial proof
of the slightly weaker result mentioned above.

\section{Preliminaries}\label{sec:preliminaries}

\paragraph{Notation.} 
For $x,y\in \R^n$, let $d(x,y):=\|x-y\|$ be the Euclidean distance
between $x$ and $y$.  For $z \in \R$, define $\sgn(z) \deq 0$ if $z
\geq 0$, and $\sgn(z) = 1$ otherwise.  For a set $S\subseteq \R^n$,
let $d(x,S)$ be the infimum over all $y\in S$ of $d(x,y)$.  The unique
rotationally-invariant probability distribution on the 
$n$-dimensional sphere $\Sphere^{n-1}$ is the Haar measure, which we
denote by~$\nu$. When we say that a vector is taken from the uniform 
distribution over a measurable subset of the sphere, we will always mean
that it is distributed according to the Haar measure, conditioned on 
being in that subset.

Define the max-cost of a communication protocol to be the length of the
longest {\em single} message sent during an execution of the protocol,
for a worst-case input. We use $R^k_\eps(f)$ to denote the minimal
max-cost amongst all two-party, $k$-round, public-coin
protocols that compute $f$ with error probability at most $\eps$ on every
input (here a ``round'' is one message).  See~\cite{kushilevitz&nisan}
for precise definitions.



\subsection{Problem definition}

We will prove our lower bounds for the problem $\GHS{d,\gap}$, where
$d$ is an integer and $\gap>0$. In this problem Alice receives a
$d$-dimensional unit vector $x$, and Bob receives a $d$-dimensional unit vector $y$,
with the promise that $|x\cdot y|\geq \gap$.  Alice and Bob should
output $\sgn(x\cdot y)$.

We show that $\GHS{n,1/\sqrt{n}}$ has essentially the same randomized
communication complexity as the problem GHD that we defined in the introduction. 
Generalizing that definition, for any
$g>0$ define the problem $\GHD{n}{g}$, in which the input is formed of
two $n$-bit strings $x$ and $y$, with the promise that
$|\Delta(x,y)-n/2|\geq g$, where $\Delta$ is the Hamming
distance. Alice and Bob should output $0$ if $\Delta(x,y) < n/2$ and
$1$ otherwise.

The following proposition shows that for any $\sqrt{n}\leq g \leq n$,
the problems $\GHD{n}{g}$ and $\GHS{d,\gap}$ are essentially equivalent from the point
of view of randomized communication complexity (with shared randomness) as long as $d\geq n$ and $\gap = \Theta(g/n)$. 
This proposition also shows that the randomized communication complexity of $\GHS{d,\gap}$ is independent of 
the dimension $d$ of the input, as long as it is large enough with respect to the gap $\gap$. 
\begin{proposition}\label{prop:equiv}
For every $\eps>0$, there is a constant $C_0=C_0(\eps)$ such that
for every integers $k,d\ge 0$ and $\sqrt{n}\leq g \leq n$, we have
$$
R^k_{2\eps}(\GHS{d,C_0 g/n}) \leq R^k_\eps(\GHD{n}{g}) \leq R^k_\eps(\GHS{n,2g/n}).
$$
\end{proposition}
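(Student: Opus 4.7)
The plan is to prove the two inequalities separately via natural mappings between the Boolean cube and the sphere; each reduction will be purely local (using only shared randomness), so the assumed protocol can be run on the translated input with no extra communication, preserving the round count.

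For the upper bound $R^k_\eps(\GHD{n}{g}) \leq R^k_\eps(\GHS{n,2g/n})$, I would use the standard embedding $\{0,1\}^n\to \Sphere^{n-1}$ sending $x$ to the unit vector $\hat x$ with $\hat x_i = (1-2x_i)/\sqrt n$. A short calculation gives $\hat x\cdot\hat y = 1-2\Delta(x,y)/n$, so the promise $|\Delta(x,y)-n/2|\ge g$ becomes $|\hat x\cdot\hat y|\ge 2g/n$, and the sign of $\hat x\cdot\hat y$ matches the desired $\GHD$ output. Alice and Bob apply the embedding locally and then invoke the assumed $\GHS{n,2g/n}$ protocol on $(\hat x,\hat y)$; error and round count are preserved.

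For the lower bound $R^k_{2\eps}(\GHS{d,C_0 g/n}) \leq R^k_\eps(\GHD{n}{g})$, the plan is the reverse reduction via random hyperplane rounding of Goemans--Williamson type. Using the public coins, Alice and Bob jointly sample $n$ i.i.d.\ standard Gaussian vectors $r_1,\ldots,r_n\in\R^d$; Alice sets $x'_i = \sgn(\langle x,r_i\rangle)$, Bob sets $y'_i = \sgn(\langle y,r_i\rangle)$, and they feed $(x',y')$ into the assumed $\GHD{n}{g}$ protocol. The classical hyperplane-rounding identity makes the bits $[x'_i\ne y'_i]$ i.i.d.\ Bernoulli with parameter $\theta/\pi$ where $\theta=\arccos(x\cdot y)$, so $\Delta(x',y')$ is a sum of $n$ independent bounded variables with mean $n\theta/\pi$.

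The substance of the argument is to show that for a suitably chosen $C_0=C_0(\eps)$, the translated instance $(x',y')$ lies on the correct side of the $\GHD{n}{g}$ promise except with probability at most $\eps$ over the shared randomness. Using $\arcsin z\ge z$ for $z\in[0,1]$, the assumption $x\cdot y\ge C_0 g/n$ yields $\mathbb{E}[\Delta(x',y')]\le n/2-C_0 g/\pi$, and the symmetric bound holds when $x\cdot y\le -C_0 g/n$. Hoeffding's inequality then gives
$$\Pr\bigl[\Delta(x',y') > n/2 - g\bigr] \;\le\; \exp\!\left(-\frac{2(C_0/\pi-1)^2 g^2}{n}\right),$$
and the analogous bound on the other side. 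The main obstacle is that in the tight regime $g=\Theta(\sqrt n)$ the exponent $g^2/n$ is only $\Theta(1)$; the fix is to take $C_0$ large enough depending on $\eps$ alone (independent of $n$ and $g$), which the statement allows. A final union bound over the reduction-failure event and the error of the $\GHD$ protocol gives total error at most $2\eps$, as required.
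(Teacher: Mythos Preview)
Your proposal is correct and follows essentially the same approach as the paper: the same cube-to-sphere embedding for the right inequality, and Goemans--Williamson hyperplane rounding with a concentration bound for the left inequality. The only cosmetic differences are that the paper samples the $r_i$ uniformly from the sphere rather than as Gaussians (equivalent for signs), and it appeals to the Taylor expansion of $\arccos$ together with ``choosing $C_0$ sufficiently large'' in place of your explicit $\arcsin z\ge z$ plus Hoeffding computation.
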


\begin{proof}
We begin with the right inequality. The idea is that a $\textsc{GHD}_{n,g}$ protocol can be obtained by applying a given $\GHS{}$ protocol to a suitably transformed input.
  Let $x,y\in\{0,1\}^n$ be two inputs to
  $\GHD{n}{g}$. Define $\tilde{x}=\left( {(-1)^{x_i}}/{\sqrt{n}}
  \right)_{i=1,\ldots,n}$ and $\tilde{y}=\left( {(-1)^{y_i}}/{\sqrt{n}}
  \right)_{i=1,\ldots,n}$. Then $\tilde{x},\tilde{y}\in \Sphere^{n-1}$. Moreover,
  $\tilde{x}\cdot \tilde{y} = 1-2\Delta(x,y)/n$.
  Therefore, if $\Delta(x,y)\geq n/2+g$ then $\tilde{x}\cdot \tilde{y}
  \leq -2g/n$, and if $\Delta(x,y)\leq n/2-g$ then $\tilde{x}\cdot \tilde{y} \geq 2g/n$.
  This proves $R^k_\eps(\GHD{n}{g}) \leq R^k_\eps(\GHS{n,2g/n})$.

 For the left inequality, let $x$ and $y$ be two unit vectors (in
  any dimension) such that $|x\cdot y| \geq \gap$, where $\gap = C_0g/n$.
  Note that since $g \geq \sqrt{n}$,
  we have $n = \Omega(\gamma^{-2})$.  Using shared randomness, Alice
  and Bob pick a sequence of vectors $w_1,\ldots,w_n$, each
  independently and uniformly drawn from the unit sphere.  Define two $n$-bit
  strings $\tilde{x} = \left(\sgn(x\cdot w_i) \right)_{i=1,\ldots,n}$ and $\tilde{y} = \left( \sgn(y\cdot w_i) \right)_{i=1,\ldots,n}$.
Let $\alpha = \cos^{-1}(x\cdot y)$ be the angle between $x$ and
$y$. Then a simple argument (used, e.g., by Goemans and Williamson~\cite{Goemans-Williamson})
shows that the probability that a random unit vector $w$ is such that
$\sgn(x\cdot w)\neq \sgn(y\cdot w)$ is exactly
$\alpha/\pi$. This means that for each $i$, the bits $\tilde{x}_i$ and $\tilde{y}_i$ differ with
probability $\frac{1}{\pi}\cos^{-1}(x\cdot y)$, independently of the other bits of $\tilde{x}$ and~$\tilde{y}$.
The first few terms in the Taylor series expansion of $\cos^{-1}$ are $\cos^{-1}(z) =
\frac{\pi}{2} - z - \frac{z^3}{6} + O(z^5)$.
Hence, for each $i$, $\Pr_{w_i}\left(\tilde{x}_i\neq \tilde{y}_i\right) = 1/2-\Theta(x\cdot y)$,
and these events are independent for different~$i$.
Choosing $C_0$ sufficiently large, with probability at least $1-\eps$, the Hamming distance
between $\tilde{x}$ and $\tilde{y}$ is at most $n/2-g$ if $x\cdot y\geq\gap$,
and it is at least $n/2+g$ if $x\cdot y \leq -\gap$.
\end{proof}

\subsection{Concentration of measure}\label{sec:measconc}

It is well known that the Haar measure $\nu$ on a high-dimensional sphere is tightly concentrated around the equator --- around \emph{any} equator, which makes it a fairly counterintuitive phenomenon. The original phrasing of this phenomenon, usually attributed to P. L\'evy~\cite{Levy51}, goes by showing that among all subsets of the sphere, the one with the smallest ``boundary" is the spherical cap $S_\gap^x = \{y\in\Sphere^{n-1}: x\cdot y \geq \gap\}$. The following standard volume estimate will prove useful (see, e.g.,~\cite{BallNotes}, Lemma~2.2).

\begin{fact}\label{fact:cap} Let $x\in\Sphere^{n-1}$ and $\gap>0$. Then
$\displaystyle\nu(S_\gap^x) \leq e^{-\gap^2 n/2}.$
\end{fact}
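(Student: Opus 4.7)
My plan is to reduce the statement to a one-dimensional tail estimate using the rotational invariance of $\nu$, and then bound that tail by comparison with a Gaussian.

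First, by rotational invariance of the Haar measure, I may assume without loss of generality that $x = e_n$, so that $\nu(S_\gap^x) = \Pr_{y\sim\nu}[y_n\geq\gap]$. A standard decomposition---writing $y = (\sqrt{1-y_n^2}\,u,\; y_n)$ with $u$ Haar-random on $\Sphere^{n-2}$ independent of $y_n$, then integrating out $u$---shows that the marginal density of $y_n$ on $[-1,1]$ is proportional to $(1-t^2)^{(n-3)/2}$. Hence
$$
\nu(S_\gap^x) \;=\; \frac{\int_\gap^1 (1-t^2)^{(n-3)/2}\,dt}{\int_{-1}^1 (1-t^2)^{(n-3)/2}\,dt}.
$$

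Second, I would bound numerator and denominator separately. For the numerator, the elementary inequality $1-t^2\leq e^{-t^2}$ yields $(1-t^2)^{(n-3)/2}\leq e^{-(n-3)t^2/2}$, which is a Gaussian density whose tail from $\gap$ to $\infty$ is easily controlled by a standard estimate. For the denominator, a lower bound of order $1/\sqrt{n}$ is immediate by restricting the integration to $|t|\leq 1/\sqrt{n}$, where the integrand is $\Theta(1)$. Combining these two bounds gives the desired estimate up to polynomial-in-$n$ and $1/\gap$ factors, which can be absorbed into the exponential $e^{-n\gap^2/2}$ in the regime $\gap = \Omega(1/\sqrt{n})$ relevant to the paper.

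The only subtle point---and the main obstacle to producing a cleanly-stated bound---is the constant in the exponent: the direct calculation above produces $e^{-(n-3)\gap^2/2}$ rather than the sharper $e^{-n\gap^2/2}$. Closing this small gap requires either a more careful treatment of the normalizing constant $c_n = \Theta(\sqrt{n})$ combined with the polynomial slack from $\gap \geq 1/\sqrt{n}$, or equivalently an alternative moment-generating-function argument: directly establish $\mathbb{E}_{y\sim\nu}[e^{\lambda\,x\cdot y}]\leq e^{\lambda^2/(2n)}$ from the density formula above, and then apply the Chernoff bound optimized at $\lambda = n\gap$. Either route is a routine calculation involving no combinatorial or geometric subtleties, which is why the paper is content to cite Ball's notes for the statement.
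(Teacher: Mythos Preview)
The paper does not give its own proof of this fact; it simply cites Lemma~2.2 of Ball's lecture notes. So there is nothing in the paper to compare your argument against, and your instinct that the authors are content to cite a reference because the calculation is routine is exactly right.

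Your approach---reduce by rotational invariance to the marginal of one coordinate, write its density as $c_n(1-t^2)^{(n-3)/2}$, and bound the tail by the Gaussian comparison $1-t^2\le e^{-t^2}$---is the standard textbook route and is correct. It would be entirely sufficient for every application the paper makes of Fact~\ref{fact:cap} (notably Claim~\ref{claim:ip}), where a fixed multiplicative constant in front of the exponential is harmless.

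Two minor comments on your discussion of the constant. First, your remark that the polynomial prefactor can be ``absorbed'' when $\gap=\Omega(1/\sqrt{n})$ is slightly optimistic: at $\gap=c/\sqrt{n}$ both the prefactor and $e^{-n\gap^2/2}$ are $\Theta(1)$, so nothing gets absorbed and you end up with $C\,e^{-n\gap^2/2}$ for an absolute $C$, not the constant-free bound. Second, the argument that actually yields the clean $e^{-n\gap^2/2}$ (as in Ball) is geometric rather than Gaussian: one checks that every point of the cap lies within Euclidean distance $\sqrt{1-\gap^2}$ of the point $\gap x$, and then, for $\gap\le 1/\sqrt{2}$, that the solid cone from the origin over the cap is contained in the ball $B(\gap x,\sqrt{1-\gap^2})$; comparing volumes gives $\nu(S_\gap^x)\le (1-\gap^2)^{n/2}\le e^{-n\gap^2/2}$ directly, with the remaining range of $\gap$ handled by monotonicity. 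Your MGF alternative is viable too, but establishing $\mathbb{E}_{y\sim\nu}[e^{\lambda\,x\cdot y}]\le e^{\lambda^2/(2n)}$ with that precise constant is itself a calculation of the same order of difficulty.
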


Given a measurable set $A$, define its \emph{$t$-boundary} $A_t := \{x\in\Sphere^{n-1}: d(x,A)\leq t\}$, for any $t>0$. At the core of our results will be the standard fact that, for any not-too-small set $A$, the set $A_t$ contains almost all the sphere, even for moderately small values of $t$.

\begin{fact}[Concentration of measure on the sphere]\label{fact:gaussianmeasure}
For any measurable $A\subseteq\Sphere^{n-1}$ and any $t>0$,
\begin{align}\label{eq:sphereconc}
\Pr(x\in A) \Pr(x\notin A_t) \leq  4\,e^{-  t^2 n/4},
\end{align}
\noindent where the probabilities are taken according to the Haar measure on the sphere.
\end{fact}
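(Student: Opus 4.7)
The statement is a standard form of concentration of measure on the sphere, and my plan is to derive it by reducing to a Lévy-type two-sided deviation bound for a $1$-Lipschitz function around its median, which in turn follows from the spherical isoperimetric inequality combined with the cap volume bound of Fact~\ref{fact:cap}. Concretely, the key tool I will use is the following one-sided consequence of Lévy's inequality: if $f:\Sphere^{n-1}\to\R$ is $1$-Lipschitz with median $M$, then for every $s>0$,
$$
\nu(\{f\le M-s\})\le 2e^{-s^2 n/2}\quad\text{and}\quad \nu(\{f\ge M+s\})\le 2e^{-s^2 n/2}.
$$
This is obtained by applying the spherical isoperimetric inequality to the level sets of $f$ and observing that the $s$-neighborhood of a hemisphere is a spherical cap whose measure is controlled by Fact~\ref{fact:cap}.

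I apply this to the $1$-Lipschitz distance function $f(x):=d(x,A)$. Write $\alpha:=\nu(A)$ and $\beta:=\nu(A_t^c)=\nu(\{f>t\})$, and note that $A\subseteq \{f=0\}$. Let $M\ge 0$ denote a median of $f$. The proof then proceeds by a case split according to the position of $M$ relative to $0$ and $t$.

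If $\alpha\ge 1/2$, then $\nu(\{f=0\})\ge 1/2$ forces $M=0$, so the Lévy bound gives $\beta\le 2e^{-t^2n/2}$ and hence $\alpha\beta\le 2e^{-t^2n/2}\le 4e^{-t^2n/4}$. Symmetrically, if $\beta\ge 1/2$, then $M\ge t$, and $\alpha\le \nu(\{f\le M-t\})\le 2e^{-t^2 n/2}$, giving the same bound. The main case is $\alpha,\beta<1/2$, where $0<M<t$; applying the Lévy bound on both sides yields
$$
\alpha\le \nu(\{f\le 0\})\le 2e^{-M^2 n/2},\qquad \beta\le \nu(\{f\ge t\})\le 2e^{-(t-M)^2 n/2},
$$
and multiplying,
$$
\alpha\beta\le 4e^{-(M^2+(t-M)^2)n/2}\le 4e^{-t^2 n/4},
$$
where the last step uses the elementary inequality $M^2+(t-M)^2\ge t^2/2$ (minimized at $M=t/2$). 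This precisely accounts for the factor of $1/4$ rather than $1/2$ in the exponent of the claimed bound, and for the factor $4$ in front.

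The main obstacle is really just the careful derivation of the Lévy-type bound from Fact~\ref{fact:cap}, which requires identifying the $s$-enlargement of a hemisphere as a spherical cap (and checking chordal-versus-geodesic distance conventions carefully enough to retain the exponent $n/2$); once that tool is in hand, the rest of the argument is the arithmetic identity above and a clean three-way case split. An alternative route would avoid Lévy's inequality altogether by applying the spherical isoperimetric inequality separately to $A$ and to $A_t^c$, bounding each by the measure of an appropriate spherical cap via Fact~\ref{fact:cap}; the case split and the $M^2+(t-M)^2\ge t^2/2$ trick would reappear when optimizing over the radius of the extremal caps, yielding the same final bound.
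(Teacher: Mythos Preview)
Your proposal is correct and is essentially the same argument as the paper's, just phrased in the functional (L\'evy) language rather than the set-enlargement language. The paper applies the basic concentration inequality ``$\nu(B)\ge 1/2\Rightarrow \nu(B_{t'}^c)\le 2e^{-(t')^2n/2}$'' to $B=A_{t_0}$ and $B=\overline{A_{t_0}}$, where $t_0$ is chosen so that $\nu(A_{t_0})=1/2$; since $A_{t_0}=\{f\le t_0\}$ for $f=d(\cdot,A)$, this $t_0$ is exactly your median $M$, the two applications correspond to your two one-sided L\'evy bounds, and the paper's case split ($\nu(A)\ge 1/2$; $t\le t_0$; $t>t_0$) together with the identity $t_0^2+(t-t_0)^2\ge t^2/2$ matches your three cases and your $M^2+(t-M)^2\ge t^2/2$.
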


\begin{proof} The usual measure concentration inequality for the sphere (Theorem~14.1.1 in~\cite{Matousekbook}) says that for any set $B\subseteq\Sphere^{n-1}$ of measure at least $1/2$ and any $t'>0$,
\begin{align*}
\Pr(x\notin B_{t'}) \leq 2\,e^{-(t')^2n/2}. 
\end{align*}
This suffices to prove the fact if $\Pr(x\in A) \geq 1/2$.  Assume for the rest of the proof that $\Pr(x\in A) < 1/2$.
  Let $t_0$ be such that $A_{t_0}$ has measure $1/2$; such a $t_0$
  exists by continuity. Applying measure concentration to $B=A_{t_0}$
  gives
\begin{align}\label{eq:notin}
\Pr(x\notin A_{t'+t_0})& \leq  2\,e^{-(t')^2n/2},
\end{align}
for all $t'>0$, while applying it to $B=\overline{A_{t_0}}$ yields
\begin{align}\label{eq:t0}
\Pr(x\in A_{t_0-t''})\leq\Pr(x\not\in B_{t''}) \leq 2\,e^{-(t'')^2n/2}
\end{align}
for all $t'' \leq t_0$, since $A_{t_0 - t''}$ is included in the complement of $(\overline{A_{t_0}})_{t''}$.  Taking $t''=t_0$ gives us $\Pr(x \in A)\leq
2\,e^{-t_0^2n/2}$. If $t\leq t_0$ then this suffices to prove the inequality. Otherwise, set $t':= t - t_0$ in~\eqref{eq:notin} and $t'':=t_0$ in (\ref{eq:t0}) and multiply the two inequalities to obtain the required
bound, by using that $t_0^2+(t-t_0)^2 \geq t^2/2$ (which holds
since $2t_0^2+t^2/2-2t\, t_0=(\sqrt{2}t_0-t/\sqrt{2})^2\geq 0$).
\end{proof}

\paragraph{Why the sphere?} In Section~\ref{sec:combinatorial} we give a proof of a slightly weaker lower bound than the one in our main result by using measure concentration facts on the Hamming cube only. We present those useful facts now, together with a brief discussion of the differences, in terms of concentration of measure phenomenon, between the Haar measure on the sphere and the uniform distribution over the hypercube.  These differences point to the reasons why the proof of Section~\ref{sec:combinatorial} gives an inferior bound.

Similarly to our definition of a spherical cap, we can define a ``Hamming cap'' $T^x_c$ on the Hamming cube as $T^x_c = \{y\in\{0,1\}^n : \Delta(x,y)\leq n/2 - c\sqrt{n}\}$. The analogue of Fact~\ref{fact:cap} is then given by the usual Chernoff bound:

\begin{fact}\label{fact:hammingcap} \label{lem:near-half-ball}
  For all $c > 0$, we have $2^{-n}|T^x_c| \le e^{-2c^2}$.
\end{fact}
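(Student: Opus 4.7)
The plan is to reduce the claim to a standard concentration bound for sums of independent fair Bernoulli random variables. By translation symmetry of the Hamming cube (or equivalently by noting that $\Delta(x,y) = |x \oplus y|$ and the map $y \mapsto x \oplus y$ is a measure-preserving bijection of $\{0,1\}^n$), we may assume without loss of generality that $x = 0^n$, so that $T^x_c = \{y \in \{0,1\}^n : |y| \leq n/2 - c\sqrt{n}\}$, where $|y|$ denotes the Hamming weight of $y$.

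Then $2^{-n}|T^x_c|$ equals $\Pr[|Y| \leq n/2 - c\sqrt{n}]$, where $Y = (Y_1, \ldots, Y_n)$ is drawn uniformly from $\{0,1\}^n$, so that $|Y| = \sum_{i=1}^n Y_i$ is a sum of $n$ i.i.d.\ Bernoulli$(1/2)$ random variables with mean $n/2$. Hoeffding's inequality for bounded random variables then gives
\[
\Pr\left[\,|Y| - n/2 \leq -t\,\right] \;\leq\; e^{-2t^2/n}
\]
for any $t > 0$. Setting $t = c\sqrt{n}$ yields the bound $e^{-2c^2}$, which is exactly the claimed inequality.

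There is no real obstacle here: the fact is just Hoeffding (or a one-sided Chernoff bound) applied to a uniformly random string, and the reduction to $x = 0^n$ is immediate from the symmetry of the Hamming metric. The only minor point worth stating carefully is that one uses the two-sided Hoeffding bound for $[0,1]$-valued variables, which yields the clean exponent $2c^2$ (as opposed to weaker Chernoff-type exponents like $c^2/2$ that one might get from a less sharp estimate); this is precisely why the stated constant in the exponent is $2$.
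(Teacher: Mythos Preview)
Your proof is correct and matches the paper's approach: the paper simply states this fact as ``the usual Chernoff bound'' without further elaboration, and your argument via symmetry reduction to $x=0^n$ followed by Hoeffding's inequality is exactly the standard justification.
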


A result similar to L\'evy's, attributed to Harper~\cite{Harper66}, states that among all subsets (of the sphere) of a given size, the cap is the one with the smallest boundary. Following a similar proof as for Fact~\ref{fact:gaussianmeasure}, one can get the following statement for the Hamming cube (see e.g. Corollary~4.4 in~\cite{barvinoknotes}):

\begin{fact}[Concentration of measure on the Hamming cube]\label{fact:hammingmeasure}
Let $A\subseteq\{0,1\}^n$ be any set, and define $A_c = \{x\in\{0,1\}^n : \exists y\in A,\, \Delta(x,y)\leq c\sqrt{n}\}$. Then
\begin{align}\label{eq:hammingconc}
\Pr(x\in A) \Pr(x\notin A_c) \leq \,e^{- c^2},
\end{align}
where the probabilities are taken according to the uniform distribution on the Hamming cube.
\end{fact}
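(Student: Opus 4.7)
The plan is to mimic the proof of Fact~\ref{fact:gaussianmeasure} almost verbatim, replacing Lévy's isoperimetric inequality on the sphere with the analogous Harper-type isoperimetric inequality on the Hamming cube. The key ingredient I would invoke is the half-space concentration statement: for any $B \subseteq \{0,1\}^n$ of uniform measure at least $1/2$ and any $c > 0$,
\[
\Pr(x \notin B_c) \;\leq\; 2\,e^{-2c^2},
\]
which follows from Azuma--Hoeffding applied to the function $x \mapsto d(x,B)/\sqrt{n}$, a $(1/\sqrt{n})$-Lipschitz function whose median is $0$ whenever $B$ has measure at least $1/2$.

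Given this, if $\Pr(x\in A) \geq 1/2$ then applying the bound directly with $B = A$ and using $\Pr(x\in A)\leq 1$ already yields~\eqref{eq:hammingconc}. Otherwise, I would pick $c_0 \geq 0$ to be the smallest value for which $A_{c_0}$ has measure at least $1/2$; this exists because the measure of $A_c$ is nondecreasing in $c$ and equals $1$ for $c$ large enough (a minor discretization issue arises since $A_c$ only changes at $c \in \{k/\sqrt{n} : k \in \N\}$, but this is absorbed in the constants). Applying the half-space bound to $B = A_{c_0}$ gives $\Pr(x \notin A_{c_0 + c'}) \leq 2\,e^{-2c'^2}$ for all $c'>0$. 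For the reverse direction, I would apply the bound to $B = \overline{A_{c_0}}$ and use the triangle-inequality observation $A_{c_0 - c''} \subseteq \overline{B_{c''}}$ (valid because if $d(y,A)\leq (c_0-c'')\sqrt{n}$ and $z\in\overline{A_{c_0}}$ so that $d(z,A)>c_0\sqrt{n}$, then $d(y,z)\geq d(z,A)-d(y,A) > c''\sqrt{n}$), obtaining $\Pr(x \in A_{c_0 - c''}) \leq 2\,e^{-2c''^2}$; specializing to $c'' = c_0$ bounds $\Pr(x\in A)$.

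With both estimates in hand, I set $c' := c - c_0$ in the first and $c'' := c_0$ in the second (assuming $c > c_0$; the case $c \leq c_0$ is handled using the second estimate alone together with $\Pr(x \notin A_c) \leq 1$), multiply the two inequalities, and finish using $c_0^2 + (c-c_0)^2 \geq c^2/2$. This yields a bound of the form $4\,e^{-c^2}$, matching~\eqref{eq:hammingconc} up to the leading constant. The main obstacle is cosmetic: recovering the clean $e^{-c^2}$ stated in Fact~\ref{fact:hammingmeasure} requires a sharper Hamming-cube concentration estimate, such as the log-Sobolev/Bobkov--Ledoux-type inequality recorded as Corollary~4.4 in~\cite{barvinoknotes}, but the structural sphere-style argument is exactly the one I would use.
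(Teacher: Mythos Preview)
Your proposal is correct and matches the paper's approach exactly: the paper does not give a detailed proof of Fact~\ref{fact:hammingmeasure} but simply states that it follows the proof of Fact~\ref{fact:gaussianmeasure} with Harper's inequality in place of L\'evy's, and refers to Corollary~4.4 in~\cite{barvinoknotes} for the sharp constant---precisely the route and the reference you invoke. Your acknowledgment that the bootstrapping argument alone yields $4\,e^{-c^2}$ rather than $e^{-c^2}$, and that the stated constant requires the sharper Barvinok-style estimate, is accurate.
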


To compare these two statements, embed the Hamming cube in the sphere
by mapping $x\in\{0,1\}^n$ to the vector $v_x = \frac{1}{\sqrt{n}}(
(-1)^{x_i})_{i\in [n]}$. Two strings of Hamming distance $c\sqrt{n}$
are mapped to vectors with Euclidean distance $\sqrt{2c}/n^{1/4}$, so
that inequality~\eqref{eq:hammingconc} is much weaker than
inequality~\eqref{eq:sphereconc}.  In particular we see that, while on
the sphere most points are at distance roughly $1/\sqrt{n}$ from any
set of measure half, if we are restricted to the Hamming cube then
very few points are at a corresponding Hamming distance of $1$ from,
say, the set of all strings with fewer than $n/2$ $1$s, which has measure
roughly $1/2$ in the cube. This difference is crucial: it indicates that
the $n$-dimensional cube is too rough an approximation of the
$n$-dimensional sphere for our purposes, perhaps explaining why our
combinatorial bound in Section~\ref{sec:combinatorial} yields a
somewhat weaker dependence on the number of rounds.

\section{Main result}\label{sec:main}

Our main result is the following.

\begin{theorem}\label{thm:main}
  Let $0\leq\eps\leq 1/50$. There exist constants $C$, $C'$ depending
  only on $\eps$ such that the following holds for any $\gamma>0$ and
  any integers $n \geq \eps^2/(4\gap^2)$ and $k \leq C'/(\gap\ln
  (1/\gap))$: if $P$ is a randomized $\eps$-error $k$-round
  communication protocol for $\GHS{n,\gap}$ then some message has
  length at least $\frac{C}{k^2\ln k}\cdot \frac{1}{\gap^2}$ bits.
  \end{theorem}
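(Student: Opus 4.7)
My plan is to prove Theorem~\ref{thm:main} by $k$-fold round elimination directly on the spherical version $\GHS{n,\gap}$, using the sharp geometric concentration of Fact~\ref{fact:gaussianmeasure}. By Yao's minimax principle, it suffices to lower bound the distributional complexity under the uniform measure $\mu$ on $\Sphere^{n-1}\times\Sphere^{n-1}$, conditioned on the promise. The core of the proof is a round-elimination lemma which, from a deterministic $j$-round $\mu$-error-$\eps_j$ protocol with first-message length at most $\ell$, produces a $(j{-}1)$-round protocol for a slightly perturbed instance with error at most $\eps_j + \delta_j$; iterating $k$ times will yield a $0$-round protocol whose error must still be below $1/2$, contradicting a simple symmetry-based $0$-round lower bound.

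For the round elimination step, assume WLOG Alice speaks first. By pigeonhole there is a message $m^*$ and a set $A\subseteq \Sphere^{n-1}$ with $\nu(A)\geq 2^{-\ell}$ on which Alice always sends $m^*$. I define the new protocol $P'$ on input $(\tilde x,y)$ as follows: Alice computes the Euclidean-nearest point $x\in A$, and the players execute rounds $2$ through $j$ of $P$ on $(x,y)$, treating $m^*$ as the (now constant and thus omitted) first message. The error introduced by this transformation decomposes into three contributions: (i) $d(\tilde x,A)>t$, bounded by $4\cdot 2^\ell e^{-t^2 n/4}$ via Fact~\ref{fact:gaussianmeasure}; (ii) $\sgn(x\cdot y)\neq \sgn(\tilde x\cdot y)$, which is precluded once we impose the stronger promise $|\tilde x\cdot y|\geq \gap + t$ on the new instance, since $|x\cdot y - \tilde x\cdot y|\leq \|\tilde x - x\|\leq t$; and (iii) $P$ erring on $(x,y)$ rather than on $(\tilde x, y)$, which must be controlled by a change-of-measure argument exploiting the fact that $A$ is a set of non-negligible measure $\geq 2^{-\ell}$.

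Iterating $k$ times with per-round shift $t_i$ and error $\delta_i$, one obtains a $0$-round protocol for $\GHS{n,\gap+\sum_i t_i}$ with total error at most $\eps+\sum_i \delta_i$. A simple symmetry argument (for each fixed $y$, the sign of $\tilde x\cdot y$ is $\pm 1$ with equal probability under the promise-conditioned $\mu$) shows that any $0$-round protocol with final gap bounded away from $1$ has error at least $1/2$, yielding the desired contradiction $\eps+\sum_i \delta_i\geq 1/2$. Optimizing the parameters: setting $\delta_i\asymp 1/k$ requires, by Fact~\ref{fact:gaussianmeasure}, $t_i\asymp \sqrt{(\ell+\log k)/n}$, while keeping $\gap+\sum_i t_i$ bounded below $1$ (which the theorem's hypothesis $k\leq C'/(\gap\log(1/\gap))$ ensures is feasible) forces $t_i\lesssim 1/(k\gap)$ in the appropriate sense. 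Solving the resulting inequalities for $\ell$ gives $\ell = \Omega(1/(k^2\log k\cdot \gap^2))$, as claimed.

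The main obstacle is contribution (iii) above: the induced distribution of $(x(\tilde x),y)$ under $\mu$ is not $\mu$ itself, so in general $P$'s error can inflate significantly under the transformation $\tilde x\mapsto x(\tilde x)$. This is precisely where working on the sphere pays off over the combinatorial setting: as noted in Section~\ref{sec:measconc}, on the Hamming cube Fact~\ref{fact:hammingmeasure} only guarantees closeness at a much coarser scale, and the projection $\tilde x\mapsto x$ can badly reshape the marginal on Alice's input, which is what ultimately costs the extra factor $k^2\log k$ in the combinatorial bound proven in Section~\ref{sec:combinatorial}. On the sphere, by contrast, the near-isometric behavior of the Voronoi projection under Fact~\ref{fact:gaussianmeasure}, together with the $1$-Lipschitz dependence of the inner product on $\tilde x$, should allow one to absorb the distributional distortion into an additive $O(1/k)$ per-round loss; carrying out this bookkeeping cleanly over all $k$ rounds — and verifying that the accumulated gap $\gap+\sum_i t_i$ stays below $1$ — is the technical heart of the argument and the source of the final $\log k$ factor.
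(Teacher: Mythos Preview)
Your outline correctly identifies the architecture and the three error contributions, but contribution (iii) is a genuine gap, not merely bookkeeping. The claimed ``near-isometric behavior of the Voronoi projection'' does not hold in any useful sense: if $A$ is, say, a small spherical cap, the projection $\tilde x\mapsto x(\tilde x)$ sends almost all of $\Sphere^{n-1}$ onto the boundary of $A$, so the pushforward of $\nu$ can be concentrated on a set of arbitrarily small $\nu$-measure. Nothing in Fact~\ref{fact:gaussianmeasure} prevents $P$'s conditional error from being close to $1$ on that set, and the $1$-Lipschitz dependence of the inner product on $\tilde x$ is irrelevant to this distributional issue. A ``change-of-measure argument exploiting $\nu(A)\geq 2^{-\ell}$'' cannot work either, since the pushforward need not be absolutely continuous with respect to $\nu$ restricted to $A$.

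The paper resolves (iii) by a device you do not mention: before pigeonholing on Alice's first message, one discards the \emph{bad} inputs. Call $x$ \emph{good} if $\Pr_{y\sim\nu}[P(x,y)\neq\sgn(x\cdot y)\mid x]\leq (1+1/k)\,\eps$; by Markov at least a $1/(k+1)$-fraction of inputs are good, and pigeonhole is then applied only among the good $x$. Now $A$ consists entirely of good inputs, so $\Pr_y[P(x,y)\text{ errs}]\leq(1+1/k)\,\eps$ holds \emph{pointwise} for every $x\in A$, which makes the pushforward distribution on $A$ irrelevant and yields the multiplicative $(1+1/k)$ error blowup per round. For this to work cleanly one uses the unconditioned product measure $\nu\times\nu$ throughout (absorbing the promise into an additive $\eps$ via Claim~\ref{claim:inner}) rather than your promise-conditioned $\mu$, which would change from round to round. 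A secondary difference: you handle (ii) by shrinking the gap at each step, whereas the paper keeps $\gap$ fixed and bounds the sign-flip probability directly via Claims~\ref{claim:inner} and~\ref{claim:ip}; that analysis, not the gap-shrinking, is the actual source of the $\ln k$ factor in the final bound.
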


Using Proposition~\ref{prop:equiv} we get a lower bound for the
Hamming cube version GHD $=\GHD{n}{\sqrt{n}}$:

\begin{corollary}\label{cor:hamming}
Any $\eps$-error $k$-round randomized protocol for GHD communicates $\Omega(n/(k^2\ln k))$ bits.
\end{corollary}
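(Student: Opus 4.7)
The plan is to derive Corollary~\ref{cor:hamming} as an essentially immediate consequence of Theorem~\ref{thm:main} combined with the cube-to-sphere equivalence of Proposition~\ref{prop:equiv}. Concretely, the proof is a parameter substitution: the theorem's spherical lower bound becomes a Hamming lower bound once we specialize the gap to $\gap = \Theta(1/\sqrt{n})$.

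First I would fix an $\eps$-error $k$-round randomized protocol $P$ for $\mathrm{GHD} = \GHD{n}{\sqrt{n}}$ and denote its max-cost by $\ell$. By standard error reduction (repeating the protocol a constant number of times and taking a majority vote of the outputs) we may assume $\eps$ is as small a constant as we wish, say $\eps \le 1/100$; this only changes $\ell$ by a constant factor and therefore does not affect the $\Omega(\cdot)$ bound we are after. Now I would invoke the left inequality of Proposition~\ref{prop:equiv} with $g=\sqrt{n}$ (which lies in the permitted range $[\sqrt{n},n]$), thereby converting $P$ into a $k$-round $2\eps$-error protocol $P'$ for $\GHS{n,\gap}$, where $\gap := C_0(\eps)/\sqrt{n}$, whose max-cost is at most $\ell$.

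Next I would apply Theorem~\ref{thm:main} to $P'$. The hypotheses are readily verified: the error parameter $2\eps \le 1/50$; the dimension condition $n \ge (2\eps)^2/(4\gap^2) = \eps^2 n / C_0^2$ holds as soon as $C_0 \ge \eps$; and the round condition $k \le C'/(\gap\ln(1/\gap))$ translates to $k = O(\sqrt{n}/\ln n)$. In this range the theorem yields
$$
\ell \;\ge\; \frac{C}{k^2\ln k}\cdot \frac{1}{\gap^2} \;=\; \frac{C}{C_0^2}\cdot \frac{n}{k^2\ln k} \;=\; \Omega\!\left(\frac{n}{k^2\ln k}\right),
$$
which is precisely the bound claimed by Corollary~\ref{cor:hamming} (since total communication is at least the max-cost).

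The only step that warrants further comment is the edge case $k = \Omega(\sqrt{n}/\ln n)$, which falls outside the range of Theorem~\ref{thm:main}. In that regime one has $n/(k^2\ln k) = O(1)$, so the claimed bound is vacuous as any nontrivial protocol sends at least one bit. I do not foresee any real obstacle in this proof: the substantive content is entirely carried by Theorem~\ref{thm:main} (the round-elimination argument on the sphere) and by Proposition~\ref{prop:equiv} (the hyperplane-rounding reduction from the cube to the sphere), and only routine bookkeeping of the error parameter and the range of $k$ remains.
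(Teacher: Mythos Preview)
Your approach is essentially the same as the paper's: reduce to the spherical problem via Proposition~\ref{prop:equiv} and then invoke Theorem~\ref{thm:main}. The main argument for the range $k = O(\sqrt{n}/\ln n)$ is correct.

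There is, however, a small gap in your handling of the edge case. At the threshold $k \asymp \sqrt{n}/\ln n$ one computes $k^2\ln k \asymp (n/\ln^2 n)\cdot \ln n = n/\ln n$, so $n/(k^2\ln k) = \Theta(\ln n)$, not $O(1)$. Thus the claimed bound is \emph{not} vacuous just above the range where Theorem~\ref{thm:main} applies, and ``any protocol sends at least one bit'' does not cover it. The paper closes this gap differently: for $k$ outside the range of Theorem~\ref{thm:main}, it appeals to the round-independent $\Omega(\sqrt{n})$ lower bound for GHD (obtained in the introduction by the repetition reduction from the gapless problem). Since $\sqrt{n}\gg \ln n$, that general bound dominates $n/(k^2\ln k)$ throughout the remaining range, and the corollary follows. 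Replacing your last paragraph with this $\Omega(\sqrt{n})$ argument fixes the proof.
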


This follows from Theorem~\ref{thm:main} when $k = o(\sqrt{n}/\log n)$. If $k$ is larger, then the bound stated in the Corollary is in fact weaker than the general $\Omega(\sqrt{n})$ lower bound which we sketched in the introduction.

\subsection{Proof outline}\label{sec:proof}

We now turn to the proof of Theorem~\ref{thm:main}. Let $\eps$, $\gap$ and $n$ be as in the statement of the theorem. Since lowering $n$ only makes the $\GHS{n,\gap}$ problem easier, for the rest of this section we assume that $n:=\eps^2/(4\gap^2)$ is fixed, and for simplicity of notation we write $\GHS{\gap}$ for $\GHS{n,\gap}$.

\paragraph{Measurability.}
Before proceeding with the proof, we first need to handle a small technicality
arising from the continuous nature of the input space: namely,
that the distributional protocol might make
decisions based on subsets of the input space that are not measurable.
To make sure that this does not happen, set $\delta = \gap/4$ and consider players Alice and Bob
who first round their inputs to the closest vector in a fixed $\delta$-net, and then proceed
with an $\eps$-error protocol for $\GHS{\gap/2}$. Since the rounding changes $x\cdot y$ by at most
$\gamma/2$, provided Alice and Bob are given valid inputs to $\GHS{\gap}$ they will succeed with probability $1-\eps$.
Hence any randomized $\eps$-error protocol for $\GHS{\gap/2}$ can be transformed
into a randomized $\eps$-error protocol for $\GHS{\gap}$ with the same communication, but which initially rounds its inputs
to a discrete set. We prove
a lower bound on the latter type of protocol. This will ensure that all sets encountered in the proof are measurable.

\paragraph{Distributional complexity.}
By Yao's principle it suffices to
lower-bound the \emph{distributional complexity}, i.e., to analyze \emph{deterministic}
protocols that are correct with probability $1-\eps$ under some input distribution.
As our input distribution for $\GHS{\gap}$ we take the distribution
that is uniform over the inputs satisfying
the promise $|x\cdot y|\geq \gap$.

Given our choice of $n$, Claim~\ref{claim:inner} below guarantees that
 the $\nu\times\nu$-measure of non-promise inputs is at most $\eps$.
 Hence it will  suffice to lower-bound the distributional complexity of protocols
 making error at most $2\,\eps$ under the distribution $\nu\times\nu$.
 We define an \emph{$\eps$-protocol} to be a deterministic
 communication protocol for $\GHS{n,\gap}$ whose
 error under the distribution $\nu\times\nu$ is at most
 $\eps$, where we say that the protocol makes an error if $P(x,y)\neq \sgn(x,y)$.

We prove a lower bound on the maximum length of a message sent by any $\eps$-protocol, via round elimination.  The main reduction step is given by the following technical lemma:

\begin{lemma}[Round Elimination on the sphere]\label{lem:roundelim}
Let $0\leq \eps\leq 1/25$, $\gap>0$, $n=\eps^2/(4\gap^2)$, and $1\leq \kappa \leq k$. Assume there is a $\kappa$-round $\eps$-protocol $P$ such that the first message has length
bounded as $c_1 \leq C_1 \frac{n}{k^2\ln k} -7\ln(2k)$ where $C_1$ is a universal constant. Then there is a $(\kappa-1)$-round
$\eps'$-protocol $Q$ (obtained by eliminating the first
message of $P$), where
$$\eps' \leq \left(1+\frac{1}{k}\right)\,\eps + \frac{1}{16k}.$$
\end{lemma}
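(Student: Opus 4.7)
The plan is to perform a direct round elimination: identify a first-round message $m^*$ that Alice sends on a substantial subset $A \subseteq \Sphere^{n-1}$, and construct $Q$ by having Alice round her actual input $\tilde x$ to its nearest neighbor $x \in A$ before simulating the remaining $\kappa-1$ rounds of $P$, with $m^*$ playing the role of an implicit constant first message. Because $m^*$ is sent on every $x \in A$, the round-$1$ transmission becomes redundant and $Q$ indeed has $\kappa-1$ rounds.

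To find $A$, partition $\Sphere^{n-1}$ into at most $2^{c_1}$ cells $\{A_m\}$ according to Alice's first message. Writing $\mathrm{err}(m) := \Pr_{x\sim\nu|_{A_m},\,y}[P(x,y)\neq \sgn(x\cdot y)]$, we have $\sum_m \nu(A_m)\,\mathrm{err}(m)\le \eps$, and a double-Markov argument produces $m^*$ with both $\nu(A_{m^*})\ge 2^{-c_1}/(k+1)$ and $\mathrm{err}(m^*)\le (1+1/k)\eps$; set $A := A_{m^*}$. Fact~\ref{fact:gaussianmeasure} then gives $\Pr_{\tilde x\sim\nu}[d(\tilde x,A)>t] \le 4(k+1)\,2^{c_1}\,e^{-t^2 n/4}$, and under the hypothesis $c_1 \le C_1 n/(k^2\ln k) - 7\ln(2k)$ the choice $t = \Theta(1/(k\sqrt{\ln k}))$ (constants absorbed into $C_1$) makes this ``far'' probability at most $1/(32k)$; the $-7\ln(2k)$ slack in the hypothesis is precisely what absorbs the $\ln k$-type overheads coming from the Markov step and the concentration bound.

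With $A$ and $t$ in hand, $Q$ is defined as follows: on input $\tilde x$, Alice computes $x := \arg\min_{x'\in A} d(\tilde x,x')$ (defaulting to a fixed point of $A$ when $d(\tilde x,A)>t$), and then the players simulate $P$ from round~$2$ with Alice using $x$ as her input and both players acting as if $m^*$ was sent in round~$1$. The error of $Q$ on $(\tilde x,y)\sim\nu\times\nu$ is bounded by three contributions: (i) the far event $\tilde x\notin A_t$; (ii) a sign-flip event $\sgn(x\cdot y)\neq\sgn(\tilde x\cdot y)$; and (iii) a protocol-error event $P(x,y)\neq\sgn(x\cdot y)$. Contribution (i) is $\le 1/(32k)$. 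For (ii), the crucial observation is that $d(x,\tilde x)\le t$ forces the angle between $x$ and $\tilde x$ to be $O(t)$, so by the hyperplane-separation probability invoked in the proof of Proposition~\ref{prop:equiv}, $\Pr_y[\sgn(x\cdot y)\neq\sgn(\tilde x\cdot y)]\le O(t)\le 1/(32k)$. This $O(t)$ bound is much stronger than the naive $O(t\sqrt n)$ that one obtains from the anti-concentration of $\tilde x\cdot y$ alone; without it the argument would fail, since $t\sqrt n$ can be far larger than $1/k$.

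The delicate remaining piece is contribution (iii). While $\mathrm{err}(m^*)\le (1+1/k)\eps$ controls the error when $x$ is genuinely drawn from $\nu|_A$, the effective distribution of $x$ in $Q$ is the pushforward of $\nu$ by the closest-point map, which in particular puts extra mass on $\partial A$. The plan is to decompose into the case $\tilde x\in A$ (where $x=\tilde x$ and the error passes directly to $\mathrm{err}(m^*)$) and the case $\tilde x\in A_t\setminus A$ (where the boundary mass is absorbed either by enlarging $A$ slightly or by replacing the deterministic closest-point map with a shared-randomness smoothed-rounding scheme that samples uniformly from $B(\tilde x,t)\cap A$, yielding an induced distribution on $x$ that is close to $\nu|_A$ in total variation). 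Orchestrating the interplay between the three contributions so that each stays inside its allotted budget is the main technical hurdle; summing them yields $\eps'\le (1+1/k)\eps + 1/(16k)$, as required.
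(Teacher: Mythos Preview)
Your overall architecture matches the paper's, and your handling of the sign-flip event (ii) via the Goemans--Williamson hyperplane observation $\Pr_y[\sgn(x\cdot y)\neq\sgn(\tilde x\cdot y)]=\angle(x,\tilde x)/\pi\le t/\pi$ is in fact cleaner than what the paper does there: the paper instead splits according to whether $|\tilde x\cdot y|$ is tiny (bounded by an anti-concentration estimate, Claim~\ref{claim:inner}) or not (then bounded by a cap estimate, Claim~\ref{claim:ip}), and only afterwards combines the pieces.

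There is, however, a genuine gap in step (iii), which you yourself flag as ``delicate.'' Your double-Markov selects a cell $A=A_{m^*}$ with only an \emph{average} error guarantee $\mathrm{err}(m^*)\le(1+1/k)\eps$, and this is precisely what forces you to worry about the pushforward of $\nu$ under the closest-point map. The remedies you propose do not work: since $\nu(A)$ can be as small as $2^{-c_1}/(k+1)$, essentially all of the mass that lands on $A$ under the rounding map comes from $\tilde x\notin A$, and neither the deterministic nearest-neighbor map nor the ``sample uniformly from $B(\tilde x,t)\cap A$'' scheme has any reason to produce a distribution on $x$ that is $o(1)$-close in total variation to $\nu|_A$ (imagine $A$ concentrated near a lower-dimensional subset of the sphere, or with a very irregular boundary). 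The paper sidesteps the whole difficulty by reversing the order of the two selections: it \emph{first} restricts to the set $G$ of pointwise-good inputs, namely those $x$ with $\Pr_y[P(x,y)\text{ errs}\mid x]\le(1+1/k)\eps$ (Markov gives $\nu(G)\ge 1/(k+1)$), and only \emph{then} partitions $G$ by first message and takes the largest cell $A\subseteq G$. Now every single $x\in A$ satisfies the conditional error bound individually, so $\Pr[P(x,y)\text{ errs}]\le(1+1/k)\eps$ holds regardless of how $x$ is distributed inside $A$, and the pushforward concern evaporates. With this one-line fix your argument is complete---and, thanks to your treatment of (ii), arguably slightly streamlined relative to the paper's.
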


Before proving this lemma in Section~\ref{sec:roundelim}, we show how it implies Theorem~\ref{thm:main}.

\begin{proof}[Proof of Theorem~\ref{thm:main}]
We will show that in any $k$-round $(2\,\eps)$-protocol, there is a message sent of length at
least $C_1 n/( k^2\ln k) - 7\ln(2k)$. The discussion in the ``Distributional complexity" paragraph above shows
this suffices to prove the theorem, by setting $C = C_1\eps^2/8$,
and choosing $C'$ small enough so that the bound on $k$ in the
statement of the theorem implies that $7\ln(2k) < C_1 n/( 2 k^2\ln
k)$. 

Let $P$ be a $k$-round $(2\,\eps)$-protocol, and assume for contradiction
that each round of communication uses at most $C_1 n/( k^2\ln k) -
7\ln(2k)$ bits. The recurrence $\eps_\kappa =
(1+1/k)\eps_{\kappa-1}+1/(16k)$, $\eps_0=2\,\eps$, is easily solved to
$\eps_\kappa = (1+1/k)^\kappa(2\,\eps+1/16)-1/16$, so that applying
Lemma~\ref{lem:roundelim} $k$ times leads to a \emph{0-round} protocol
for $\GHS{\gap}$ that errs with probability at most $\eps' \leq
e\,(2\,\eps+1/16)-1/16\leq 1/4$ over the input distribution $\nu\times\nu$.
 We have reached a contradiction:
such a protocol needs communication and hence cannot be 0-round. Hence
$P$ must send a message of length at least $C_1 n/( k^2\ln k) -
7\ln(2k)$, which is what needed to be shown.
\end{proof}

\subsection{The main reduction step}\label{sec:roundelim}

\begin{proof}[Proof of Lemma~\ref{lem:roundelim}]
Let $P(x,y)$ denote the output of the protocol on input $x,y$.
Define $x \in \Sphere^{n-1}$ to be \emph{$\delta$-good} if
$\Pr_{\nu\times\nu}(P(x,y) \text{ errs } | x) \leq \delta \eps$.  By
Markov's inequality, at least a $(1-1/\delta)$-fraction of $x$ (distributed
according to $\nu$) are good.  For a given message $m$, let $A_m$ be the set
of all good $x$ on which Alice sends $m$ as her first message.
The sets $A_m$, over all messages $m\in\{0,1\}^{c_1}$,
form a partition of the set of good $x$.
Define $m_1:=\text{argmax}_m  \nu(A_m)$ and let $A := A_{m_1}$.
Setting $\delta = 1+1/k$, we have $\nu(A)\geq \left(1-\frac{1}{\delta}\right)2^{-c_1} \geq e^{-c_1-\ln(k+1)}$.

We now define protocol $Q$. Alice receives an input $\tilde{x}$, Bob receives $\tilde{y}$,
both distributed according to $\nu$.
Alice computes the point $x\in A$ that is closest to $\tilde{x}$, and Bob sets $y :=\tilde{y}$.
They run protocol $P(x,y)$ without Alice sending the first message,
so Bob starts and proceeds as if he received the fixed message $m_1$ from Alice.

To prove the lemma, it suffices to bound the error probability $\eps'$ of $Q$ with input $\tilde{x},\tilde{y}$ distributed according to $\nu\times\nu$.
Define $d_1=2\sqrt{\frac{c_1+6\ln(2k)+2}{n}}$.
We consider the following bad events:
\begin{itemize}
\item $\text{BAD}_1: d(\tilde{x},A) > d_1 $
\item $\text{BAD}_2: P(x,y)\neq \sgn(x\cdot y)$
\item $\text{BAD}_3: d(\tilde{x},A) \leq d_1$ but $\sgn(x\cdot y) \neq \sgn(\tilde{x} \cdot
  \tilde{y})$.
\end{itemize}
If none of those events occurs, then protocol $P$ outputs the correct
answer. We bound each of them separately, and will conclude by
upper bounding $\eps'$ with a union bound.

The first bad event can be easily bounded using the measure
concentration inequality from Fact~\ref{fact:gaussianmeasure}.  Since
$\tilde{x}$ is uniformly distributed in $\Sphere^{n-1}$ and
$\Pr(A)\geq e^{-c_1-\ln(k+1)}$, we get
$$
\Pr(\text{BAD}_1)\leq  4\,e^{-d_1^2 n/4 + c_1+\ln(k+1)}\leq 4\,e^{-5\ln(2k)-2}\leq \frac{1}{32k}.
$$

The second bad event has probability bounded by $(1+1/k)\,\eps$ by
the goodness of $x$. Now consider event $\text{BAD}_3$. Without
loss of generality, we may assume that $\tilde{x}\cdot \tilde{y} =
\tilde{x}\cdot y > 0 $ but $x\cdot y <0 $ (the other case is treated
symmetrically). In order to bound $\text{BAD}_3$, we will use the two following claims. The first shows that the probability that $\tilde{x}\cdot y$  is close to $0$ for a random $\tilde{x}$ and $y$ is small. The second uses measure concentration to show that, if $\tilde{x}\cdot y$ is not too close to $0$, then moving $\tilde{x}$ to the nearby $x$ is unlikely to change the sign of the inner product.

\begin{claim}\label{claim:inner}
  Let $x,y$ be distributed according to $\nu$. For any real
  $\alpha\geq 0$,
 $$\Pr( 0\leq x\cdot y\leq \alpha) \leq  \alpha\sqrt{n}$$
 \end{claim}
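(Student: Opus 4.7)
The plan is to reduce the two-variable claim to a one-dimensional density estimate using the rotational invariance of $\nu$. Concretely, conditioning on $y$ and applying rotational invariance shows that the law of $x\cdot y$ for $x\sim\nu$ does not depend on the unit vector $y$; hence we may replace $y$ by $e_1$ and study $x_1$, the first coordinate of a Haar-random point on $\Sphere^{n-1}$.

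Next I would use the standard slicing (``height'') formula for the marginal distribution of $x_1$:
$$f(t) \;=\; c_n\,(1-t^2)^{(n-3)/2},\qquad t\in[-1,1],$$
with normalizing constant $c_n = \Gamma(n/2)/(\sqrt{\pi}\,\Gamma((n-1)/2))$. This follows by writing $x = (t,\sqrt{1-t^2}\,u)$ for $u\in\Sphere^{n-2}$ and computing the resulting Jacobian; the factor $(1-t^2)^{(n-2)/2}$ comes from the area of the slice, and an extra $(1-t^2)^{-1/2}$ from parametrizing the sphere by the height. For $n\ge 3$ the density $f$ is even and decreasing in $|t|$, so it is maximized at $t=0$ with value $c_n$, whence
$$\Pr(0\le x\cdot y\le \alpha) \;=\; \int_0^\alpha f(t)\,dt \;\le\; \alpha\,c_n.$$

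To conclude I would show $c_n\le\sqrt{n}$ using log-convexity of $\Gamma$. Applying the inequality $\Gamma((a+b)/2)^2\le \Gamma(a)\Gamma(b)$ with $a=(n-1)/2$ and $b=(n+1)/2$, together with the recursion $\Gamma((n+1)/2)=\tfrac{n-1}{2}\,\Gamma((n-1)/2)$, yields $\Gamma(n/2)/\Gamma((n-1)/2)\le \sqrt{(n-1)/2}$, and hence $c_n\le \sqrt{(n-1)/(2\pi)}\le\sqrt{n}$. There is no genuine obstacle here; the only care needed is to extract a clean $\sqrt{n}$ constant (rather than a weaker $O(\sqrt{n})$) from the Gamma ratio, which falls out immediately from log-convexity. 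The handful of small dimensions where the density is not monotone ($n\le 2$) can be disposed of by direct inspection, but in any case the application in the paper has $n$ large.
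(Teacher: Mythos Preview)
Your proposal is correct and follows essentially the same route as the paper. Both reduce to the marginal density $c_n(1-t^2)^{(n-3)/2}$ of a single coordinate, bound the integrand by $1$, and then bound the normalizing constant by $\sqrt{n}$; the only cosmetic difference is that the paper writes $c_n$ as the surface-area ratio $(n-1)\omega_{n-1}/(n\omega_n)$ and invokes the standard estimate $\omega_{n-1}/\omega_n<\sqrt{(n+1)/(2\pi)}$, whereas you express $c_n$ via Gamma functions and bound it by log-convexity --- these are two standard ways to control the same quantity.
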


 \begin{proof} Letting $\omega_n$ be the volume of the $n$-dimensional Euclidean unit ball, we can write (see e.g.,~\cite{BGKKLS98}, Lemma~5.1)
 \begin{align*}
 \Pr(  0\leq x\cdot y\leq \alpha) &= \frac{(n-1)\,\omega_{n-1}}{n\,\omega_n}\int_{0}^\alpha (1-t^2)^{\frac{n-3}{2}}\text{dt}\\
 & \leq \alpha\sqrt{n}
 \end{align*}
 where we used $\frac{\omega_{n-1}}{\omega_n} < \sqrt{\frac{n+1}{2\pi}} < \sqrt{n}$.
\end{proof}

\begin{claim}\label{claim:ip}
Let $x,\tilde{x}$ be two fixed unit vectors at distance
$\|x-\tilde{x}\| = d\in[0,d_1]$, and $0 < \alpha \leq 1/(4\sqrt{n})$.
Let $y$ be taken according to $\nu$. Then
$$\Pr(\tilde{x}\cdot y \geq \alpha \wedge x\cdot y< 0) \leq
\,e^{-\alpha^2 n/(8d_1^2)}.$$
\end{claim}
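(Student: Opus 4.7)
My plan is to reduce this to a direct application of the spherical cap volume bound (Fact~\ref{fact:cap}). The key observation is that the joint event $\{\tilde{x}\cdot y \geq \alpha\}\cap\{x\cdot y<0\}$ immediately implies $(\tilde{x}-x)\cdot y > \alpha$ just by subtracting the two inequalities. When $d=0$ we have $\tilde{x}=x$, the two conditions become incompatible, and the probability is trivially zero. Otherwise, setting $\hat{u}:=(\tilde{x}-x)/d$, which is a unit vector in $\Sphere^{n-1}$ since $\|\tilde{x}-x\|=d$, the event in question is contained in the single-direction event $\{y\in\Sphere^{n-1}:\hat{u}\cdot y > \alpha/d\}$.

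Next, I would apply Fact~\ref{fact:cap} with the unit vector $\hat{u}$ and gap parameter $\alpha/d$. This immediately bounds the Haar measure of the spherical cap $\{y:\hat{u}\cdot y\geq \alpha/d\}$ by $e^{-(\alpha/d)^2 n/2}$. Since $d\leq d_1$, we have $(\alpha/d)^2\geq (\alpha/d_1)^2$, so the bound is at most $e^{-\alpha^2 n/(2d_1^2)}$, which is comfortably smaller than the stated $e^{-\alpha^2 n/(8d_1^2)}$. (The slack of a factor of $4$ in the exponent is presumably only there because the authors might alternatively derive the estimate from Fact~\ref{fact:gaussianmeasure}, which has a weaker constant $n/4$.)

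I do not anticipate any serious obstacle: the claim is essentially a geometric rephrasing of the statement that a uniformly random direction on $\Sphere^{n-1}$ must correlate with the fixed direction $(\tilde{x}-x)/\|\tilde{x}-x\|$ by at least $\alpha/d$, which is exactly what the spherical cap bound is designed to control, and the only real manipulation is the subtraction of the two inner-product inequalities together with the normalization by $d$ (which is where the dependence on $d_1$ enters). It is worth noting that the hypothesis $\alpha\leq 1/(4\sqrt{n})$ is not actually used in this proof; I expect it is imposed because this claim is later combined with Claim~\ref{claim:inner} via a union bound, so $\alpha$ must be small enough that $\alpha\sqrt{n}$ in Claim~\ref{claim:inner} stays useful while the Gaussian-type tail here still dominates.
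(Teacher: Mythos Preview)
Your argument is correct, and in fact cleaner than the paper's own proof. The key step—subtracting the two inner-product inequalities to obtain $(\tilde{x}-x)\cdot y>\alpha$ and then normalizing by $d$—reduces the claim to a single application of Fact~\ref{fact:cap} along the direction $\hat{u}=(\tilde{x}-x)/d$, yielding the bound $e^{-\alpha^2 n/(2d^2)}\leq e^{-\alpha^2 n/(2d_1^2)}$, which is even stronger than what is claimed.

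The paper proceeds differently: it fixes explicit coordinates so that $\tilde{x}=e_1$ and $x$ lies in the span of $e_1,e_2$, then argues that the joint event forces the second coordinate $y_2$ to exceed $\alpha/(2d)$, and applies Fact~\ref{fact:cap} in the $e_2$ direction. This requires the auxiliary estimate $(1-d^2/2)/\sqrt{d^2-d^4/4}\geq 1/(2d)$ (valid because $d\leq d_1\leq 1$), which is where the extra factor of $4$ in the exponent is lost. Your route avoids coordinates entirely and bypasses this step, explaining the slack you observed. Your remark that the hypothesis $\alpha\leq 1/(4\sqrt{n})$ is never used is also accurate; the paper's proof does not use it either.
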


\begin{proof}
Note that $x\cdot\tilde{x}= 1-\|x-\tilde{x}\|^2/2 = 1-d^2/2$.
Since the statement of the lemma is rotationally-invariant, we may assume without loss of generality that
\begin{align*}
\tilde{x} & =  (1,0,0\ldots,0),\\
x         & =  (1-d^2/2,-\sqrt{d^2-d^4/4},0,\ldots,0),\\
y         & =  (y_1,y_2,y_3,\ldots,y_n).
\end{align*}
Therefore, $y_1\geq \alpha$ when $\tilde{x}\cdot y\geq\alpha$.
Note that
$$
x\cdot y = x_1 y_1 + x_2 y_2\geq(1-d^2/2)\alpha - \sqrt{d^2-d^4/4}\,y_2.
$$
Hence the event $\tilde{x}\cdot y \geq \alpha \wedge x\cdot y< 0$ implies
\begin{align*}
y_2 &>\frac{(1-d^2/2)\,\alpha}{\sqrt{d^2-d^4/4}}\\
& \geq \frac{\alpha}{2d}
\end{align*}
where we used the fact that $d\leq d_1\leq 1$, given our assumption on $c_1$.
By Fact~\ref{fact:cap}, the probability that, when $y$ is sampled from $\nu$, $y_2$ is larger than $\alpha/(2d)$ is at most $ e^{-\alpha^2 n/(8d^2)}$. Hence the probability that both $\tilde{x}\cdot y \geq \alpha$ and $x\cdot y <0$ happen is at most as much.
\end{proof}

Setting $\alpha = 1/(128k\sqrt{n})$, by Claim~\ref{claim:inner} we
find that the probability that $0\leq \tilde{x}\cdot y \leq \alpha$ is
at most $1/(128k)$.  Furthermore, the probability that $\tilde{x}\cdot
y \geq \alpha$ and $x\cdot y < 0$ is at most $\exp\left( -
\frac{n}{2^{19} k^2 (c_1+6\ln(2k)+2)}\right)$ by Claim~\ref{claim:ip}.
This bound is less than $1/(128k)$ given our assumption on $c_1$,
provided $C_1$ is a small enough constant.  Putting both bounds
together, we see that 
$$\Pr(\tilde{x}\cdot y \geq 0 \wedge x\cdot y <
0) < 1/(64k).$$
 The event that $\tilde{x}\cdot y <0$ but $x \cdot y
\geq 0$ is bounded by $1/(64k)$ in a similar manner.
Hence, $\Pr(\text{BAD}_3) < 1/(32k)$.  Taking the union bound over all three
bad events concludes the proof of the lemma.
\end{proof}

\section{A simple combinatorial proof}\label{sec:combinatorial}

In this section we present a combinatorial proof of the following:

\begin{theorem}\label{thm:hd-ghd}
  Let $0\leq\eps\leq 1/50$. There exists a constant $C''$ depending on $\eps$ only, such that the following holds for any $g \leq C''\sqrt{n}$ and $k \leq n^{1/4}/(1024\log n)$: if $P$ is a
  randomized $\eps$-error $k$-round communication protocol for $\GHD{n}{g}$
  then some message has length at least $\frac{n}{(512k)^4\log^2 k}$ bits.
\end{theorem}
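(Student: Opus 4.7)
}
My plan is to mimic the structure of the proof of Theorem~\ref{thm:main}, but working entirely in the combinatorial setting of the Hamming cube and replacing each use of spherical measure concentration (Fact~\ref{fact:gaussianmeasure}) and the inner-product claims (Claims~\ref{claim:inner} and~\ref{claim:ip}) by their Hamming-cube counterparts. First, I would reduce to distributional complexity via Yao's principle, taking the input distribution to be the uniform distribution on $\{0,1\}^n \times \{0,1\}^n$. Since $g \leq C''\sqrt{n}$ with $C''$ small, the analogue of the ``non-promise'' check from the introduction shows that the Binomial$(n,1/2)$ distribution of $\Delta(x,y)$ places at most an $\eps$-fraction of mass inside $\{\Delta: |\Delta - n/2| < g\}$, so it suffices to lower-bound the distributional complexity of protocols with error at most $2\eps$ under the unconditioned uniform distribution.

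Next I would prove a Hamming-cube round elimination lemma analogous to Lemma~\ref{lem:roundelim}: if the first message has length $c_1 \leq C\, n/(k^4\ln^2 k) - O(\ln k)$, then we can eliminate it at the cost of an error blow-up of at most $(1+1/k)\,\eps + 1/(16k)$. The construction is identical in spirit: a Markov argument identifies a message $m_1$ together with a set $A \subseteq \{0,1\}^n$ of ``good'' inputs on which Alice sends $m_1$ and the conditional error is small, with $|A|/2^n \geq e^{-c_1 - \ln(k+1)}$. For a uniformly random $\tilde{x}$, Alice replaces $\tilde{x}$ by its Hamming-nearest $x \in A$ and the players run $P$ from Bob's reply to $m_1$. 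I would bound the same three bad events: $\text{BAD}_1$ that $\Delta(\tilde{x},A) > D$ for a chosen threshold $D = c_*\sqrt{n}$; $\text{BAD}_2$ that $P$ errs on $(x,y)$; and $\text{BAD}_3$ that the GHD answer flips when moving $\tilde{x}$ to $x$. Here $\text{BAD}_1$ is handled by Fact~\ref{fact:hammingmeasure} with parameter $c_*$, which requires $c_*^2 \gtrsim c_1 + O(\ln k)$.

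The hardest step, and the one that forces the weaker quantitative bound, is controlling $\text{BAD}_3$. The analogue of Claim~\ref{claim:inner} is easy: for uniform $y$, the density of $\Delta(\tilde{x},y)$ near $n/2$ is $\Theta(1/\sqrt{n})$, so $\Pr(|\Delta(\tilde{x},y) - n/2| < \alpha) = O(\alpha/\sqrt{n})$, and I will choose $\alpha = \Theta(\sqrt{n}/k)$ to make this $O(1/k)$. The analogue of Claim~\ref{claim:ip} is a Hoeffding bound: if $\Delta(x,\tilde{x}) = D$, then for uniform $y$ the random variable $\Delta(x,y) - \Delta(\tilde{x},y)$ is a sum of $D$ independent $\pm 1$ terms, so $\Pr(|\Delta(x,y)-\Delta(\tilde{x},y)| \geq \alpha) \leq 2\exp(-\alpha^2/(2D))$. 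For this to be $O(1/k)$ I need $\alpha^2 \gtrsim D\ln k$, i.e., $D \lesssim \alpha^2/\ln k = n/(k^2\ln k)$. Combining with $D = c_*\sqrt{n}$ and the earlier requirement $c_*^2 \gtrsim c_1 + O(\ln k)$ gives $c_1 \lesssim n/(k^4\ln^2 k)$, as claimed.

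Finally, having established the round-elimination lemma, I conclude exactly as in the proof of Theorem~\ref{thm:main}: assuming by contradiction that every message in a $k$-round $(2\eps)$-protocol has length below the stated threshold, I apply the lemma $k$ times, tracking the recurrence $\eps_\kappa = (1+1/k)\eps_{\kappa-1} + 1/(16k)$, and arrive at a $0$-round protocol with error below $1/4$, a contradiction. The main obstacle is really the tension in the $\text{BAD}_3$ analysis described above: the weaker Hamming-cube concentration forces the movement $D$ to be polynomially larger (in $1/\sqrt{n}$-units) than the corresponding spherical displacement, and Hoeffding's bound on the resulting deflection of $\Delta(\cdot,y)-n/2$ costs an extra factor of $k^2\ln k$ in the allowed message length, precisely accounting for the gap between Corollary~\ref{cor:hamming} and Theorem~\ref{thm:hd-ghd}.
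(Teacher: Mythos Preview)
Your proposal is correct and follows essentially the same approach as the paper: reduce to distributional complexity under the uniform distribution, prove a Hamming-cube round-elimination lemma with the same three bad events and the same error recurrence, and iterate $k$ times to reach a contradiction. The only notable difference is in the handling of $\text{BAD}_3$: you bound $\Delta(x,y)-\Delta(\tilde{x},y)$ directly as a sum of $D$ independent $\pm 1$ variables over the coordinates where $x$ and $\tilde{x}$ differ, whereas the paper shifts so that $x=0$, conditions on the Hamming weight of $y$ being at least $\sqrt{n}/(128k)$ away from $n/2$, and then applies a hypergeometric tail bound to $|\tilde{x}\cap y|$; both arguments yield the same constraint $D\lesssim n/(k^2\ln k)$ and hence the same $c_1\lesssim n/(k^4\ln^2 k)$ threshold.
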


Even though this is a weaker result than
Theorem~\ref{thm:main}, its proof is
simpler and is based on concentration of measure in the Hamming cube
rather than on the sphere (we refer to Section~\ref{sec:measconc} for
a high-level comparison of the two proofs). Interestingly, the dependence
on the number of rounds that we obtain is quadratically worse than that
of the proof using concentration on the sphere. We do not know if this
can be improved using the same technique.

We proceed as in Section~\ref{sec:proof}, observing that it suffices
to lower-bound the distributional complexity of $\GHD{n}{g}$ under a
distribution uniform over the inputs satisfying the promise
$|\Delta(x,y)-n/2|\geq g$. In fact, as we did before, by
taking $C''$ small enough we can guarantee that the number of non-promise inputs is at
most $\eps\, 2^n$. Hence it will suffice to lower-bound the
distributional complexity of protocols making error at most $2\,\eps$
under the uniform input distribution.  We define an \emph{$\eps$-protocol} to
be a deterministic communication protocol for GHD whose distributional
error under the uniform distribution is at most $\eps$.
 The following is the analogue of Lemma~\ref{lem:roundelim},
from which the proof of Theorem~\ref{thm:hd-ghd} follows as in
Section~\ref{sec:proof}.

\begin{lemma}[Round Elimination on the Hamming cube]
Let $0<\eps\leq 1/25$ and $\kappa,k$ be two integers such that $k\geq 128$ and $1\leq\kappa\leq k \leq n^{1/4}/(1024\log n)$. Assume that there is a $\kappa$-round $\eps$-protocol $P$ such that the first message has length
bounded by $c_1\le  n/((512k)^4\log^2 k)$.
    Then there exists a $(\kappa-1)$-round
$\eps'$-protocol $Q$ (obtained by eliminating the first
message of $P$) where
$$\eps' \leq \left(1+\frac{1}{k}\right)\,\eps + \frac{1}{16k}.$$
\end{lemma}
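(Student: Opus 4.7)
The plan is to follow the structure of the proof of Lemma~\ref{lem:roundelim}, substituting the Hamming-cube concentration of Fact~\ref{fact:hammingmeasure} for spherical concentration and the standard binomial anti-concentration for Claim~\ref{claim:inner}. Because the input space is discrete there are no measurability technicalities. Call $x\in\{0,1\}^n$ \emph{$\delta$-good} (for $\delta=1+1/k$) if $\Pr_y(P(x,y)\text{ errs}\mid x)\leq\delta\eps$; Markov's inequality implies that at least a $1/(k+1)$-fraction of $x$ are good. Partitioning the good $x$'s according to Alice's first message and keeping the most popular message $m_1$ yields a set $A\subseteq\{0,1\}^n$ with $|A|/2^n\geq 2^{-c_1}/(k+1)\geq e^{-c_1-\ln(k+1)}$. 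The protocol $Q$ is the expected one: on input $\tilde{x}$, Alice finds the Hamming-nearest $x\in A$, and both players then run $P$ on $(x,\tilde{y})$ starting with the fixed message $m_1$.

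The error of $Q$ (under uniform inputs $\tilde{x},\tilde{y}\in\{0,1\}^n$) is bounded by a union bound over three bad events: $\text{BAD}_1$ is $\Delta(\tilde{x},A)>D_1$; $\text{BAD}_2$ is that $P$ errs on $(x,y)$; and $\text{BAD}_3$ is that $\sgn(\Delta(x,y)-n/2)\neq\sgn(\Delta(\tilde{x},y)-n/2)$, where $D_1=c\sqrt{n}$ for a value of $c$ to be chosen. Fact~\ref{fact:hammingmeasure} applied with $c^2\geq c_1+\ln(k+1)+\ln(32k)$ gives $\Pr(\text{BAD}_1)\leq 1/(32k)$, and $\Pr(\text{BAD}_2)\leq(1+1/k)\eps$ is immediate from the goodness of $x$.

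The crux is $\text{BAD}_3$. The key observation is that although $d=\Delta(x,\tilde{x})$ can be as large as $D_1=\Theta(\sqrt{n(c_1+\log k)})$, the random quantity $Z:=\Delta(x,y)-\Delta(\tilde{x},y)$ is tightly concentrated around $0$: for fixed $x,\tilde{x}$ with $d$ differing coordinates and uniform $y$, one checks that $Z=d-2B$ with $B\sim\text{Bin}(d,1/2)$, so Hoeffding's inequality gives $\Pr(|Z|\geq t)\leq 2e^{-t^2/(2d)}\leq 2e^{-t^2/(2D_1)}$. Separately, since $\Delta(\tilde{x},y)\sim\text{Bin}(n,1/2)$ when $\tilde{x},y$ are uniform, the standard Stirling estimate gives $\Pr(|\Delta(\tilde{x},y)-n/2|\leq t)\leq 4t/\sqrt{n}$. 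Because $\text{BAD}_3$ forces either $|Z|>t$ or $|\Delta(\tilde{x},y)-n/2|\leq t$ (otherwise $\Delta(x,y)$ and $\Delta(\tilde{x},y)$ would lie on the same side of $n/2$), choosing $t=\sqrt{2D_1\ln(128k)}$ makes each contribution at most $1/(64k)$ provided $t\leq\sqrt{n}/(256k)$. That last constraint reduces to $D_1\leq n/(2(256k)^2\ln(128k))$, which, after squaring and using $c^2\geq c_1+\ln(k+1)+\ln(32k)$ together with the hypothesis $k\geq 128$ (so $\ln(128k)\leq 2\ln k$), matches the assumed bound $c_1\leq n/((512k)^4\log^2 k)$ after a careful tally of constants. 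A union bound then gives $\eps'\leq 1/(32k)+(1+1/k)\eps+1/(32k)=(1+1/k)\eps+1/(16k)$ as required.

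The main obstacle, compared with the sphere proof, is exactly the weaker Hamming-cube concentration highlighted in Section~\ref{sec:measconc}: Fact~\ref{fact:hammingmeasure} only guarantees $D_1=\Theta(\sqrt{n(c_1+\log k)})$, in contrast with the much sharper sphere bound $d_1\sqrt{n}=\Theta(\sqrt{c_1+\log k})$. This inflation of $D_1$ feeds into the required Hoeffding threshold $t\gtrsim\sqrt{D_1\log k}$, and via the binomial anti-concentration $O(t/\sqrt{n})$ it degrades the admissible first-message length from $n/(k^2\log k)$ on the sphere to $n/(k^4\log^2 k)$ here. There seems to be no clean Hamming analogue of the rotation-plus-spherical-cap trick used in Claim~\ref{claim:ip} to save a factor of $\sqrt{n}$, since a single coordinate of $y$ in the cube has magnitude $\Theta(1)$ rather than the $\Theta(1/\sqrt{n})$ available on the sphere.
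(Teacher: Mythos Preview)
Your proposal is correct and follows the same skeleton as the paper's proof (good inputs, pigeonhole on the first message, nearest-neighbor projection to $A$, three-bad-events union bound), with a small but genuine difference in how $\text{BAD}_3$ is handled. The paper shifts so that $x=0$, rewrites $\Delta(\tilde{x},y)>n/2$ as an intersection condition $|\tilde{x}\cap y|<(|\tilde{x}|+|y|-n/2)/2$, uses binomial anti-concentration on $|y|$ and then Hoeffding for the hypergeometric distribution of $|\tilde{x}\cap y|$. You instead look directly at $Z=\Delta(x,y)-\Delta(\tilde{x},y)$, observe that for fixed $x,\tilde{x}$ at Hamming distance $d$ and uniform $y$ this is a sum of $d$ i.i.d.\ Rademacher variables, and combine binomial concentration on $Z$ with anti-concentration on $\Delta(\tilde{x},y)$ near $n/2$. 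This is arguably cleaner (no shift, no symmetry argument, no hypergeometric), and it lands on the same constraint $D_1\lesssim n/(k^2\log k)$, hence the same $c_1\lesssim n/(k^4\log^2 k)$ threshold. Your ``careful tally of constants'' is accurate: with $k\ge 128$ one has $\ln(128k)\le 2\ln k$, and the additive $O(\log k)$ in $c^2$ is absorbed by the assumption $k\le n^{1/4}/(1024\log n)$, matching the paper's bookkeeping.
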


\begin{proof}
Define $x \in \{0,1\}^n$ to be \emph{good} if
$\Pr(P(x,y) \text{ errs } | x) \leq (1+1/k) \eps$.  By
Markov's inequality, at least a $1/(k+1)$-fraction of $x\in\{0,1\}^n$ are good.  For a given message $m$, let $A_m
\deq \{\text{good }x: \text{ Alice sends $m$ given }x\}$.  The sets $A_m$, over all messages $m\in\{0,1\}^{c_1}$,
together form a partition of the set of good $x$.
Define $m_1:=\text{argmax}_m  |A_m|$, and let $A:=A_{m_1}$.
By the pigeonhole principle, we have $|A|\geq  \frac{1}{k+1} 2^{n-c_1}$.

We now define protocol $Q$. Alice receives an input $\tilde{x}$, Bob receives $\tilde{y}$, uniformly distributed.
Alice computes the string $x\in A$ that is closest to $\tilde{x}$ in Hamming distance, and Bob sets $y :=\tilde{y}$.
They run protocol $P(x,y)$ without Alice sending the first message,
so Bob starts and proceeds as if he received the fixed message $m_1$ from Alice.

To prove the lemma, it suffices to bound the error probability $\eps'$ of $Q$ under the uniform distribution.
Define $d_1=9\sqrt{n}/((1024 k)^2\log k)$. As in the proof of Lemma~\ref{lem:roundelim}, we consider the following bad events:
\begin{itemize}
\item $\text{BAD}_1: \Delta(x,\tilde{x}) > d_1\sqrt{n} $
\item $\text{BAD}_2: P(x,y)\neq \text{GHD}(x,y)$
\item $\text{BAD}_3: \Delta(x,\tilde{x}) \leq d_1\sqrt{n}$ but GHD$(\tilde{x},y)\neq \text{GHD}(x,y)$
\end{itemize}
If none of those events occurs, then protocol $P$ outputs the correct
answer. We bound each of them separately, and will conclude by a union bound.

The first bad event is easily bounded using Fact~\ref{fact:hammingmeasure}, which implies that
$$
\Pr(\tilde{x}\notin A_{d_1}) \leq e^{-81n/((1024k)^4\log^2 k)} 2^{c_1+\log(k+1)} \leq \frac{2}{k^2} \leq \frac{1}{32k}
$$
given our assumptions on $c_1$ and $k$. The second bad event is bounded by $(1+1/k)\,\eps$, by definition of the set~$A$.

We now turn to $\text{BAD}_3$. The event that GHD$(\tilde{x},y)\ne $ GHD$(x,y)$ only depends on the
relative distances between $x$, $\tilde{x}$, and $y$, so we may apply
a shift to assume that $x = (0,\ldots,0)$.  Without loss of
generality, we assume that $\Delta(\tilde{x},y)>n/2$ and
$|y|<n/2$ (the error bound when $\Delta(\tilde{x},y)<n/2$ and
$|y|>n/2$ is proved in a symmetric manner). Note that, since $y$ is
uniformly random (subject to $|y|<n/2$), with probability at least $1-1/(128k)$, we have
$|y|\leq n/2-\sqrt{n}/(128k)$. Hence we may assume that this holds with an
additive loss of at most $1/(128k)$ in the error. Now
  \begin{align*}
    \dist(\tilde{x},y) > n/2
    &\Longleftrightarrow |\tilde{x}| + |y| - 2|\tilde{x} \cap y| > n/2 \\
    &\Longleftrightarrow |\tilde{x} \cap y| < \frac{|\tilde{x}|+|y|-n/2}{2}.
  \end{align*}
  It is clear that the worst case in this statement is for
  $|y|=n/2-\sqrt{n}/(128k)$ and $|\tilde{x}| = \Delta(x,\tilde{x}) = d_1\sqrt{n}$. By symmetry,
  the probability that this event happens is the same as if we fix any
  $y$ of the correct weight, and $\tilde{x}$ is a random string of
  weight $d_1\sqrt{n}$. Since the expected intersection size is
  $|\tilde{x}|/2 -d_1/(128k)$, by Hoeffding's inequality (see e.g. the bound on the tail of the hypergeometric distribution given in~\cite{Chvatal79}), for $a=\sqrt{n}/(256k) - d_1/(128k)$

  \begin{align*}
    \Pr\left(|\tilde{x} \cap y| \leq \frac{|\tilde{x}|+|y|-n/2}{2}\right) &=
    \Pr\left(|\tilde{x} \cap y| \le \mathbb{E}[|\tilde{x}\cap y|] - a\right) \\
    &\le  e^{-2a^2/(d_1\sqrt{n})}.
  \end{align*}
  Given our choice of $d_1$ we have $a \geq 3\sqrt{n}/(4\cdot 256k)$, and hence the upper bound is at most
  $1/k^2 \leq 1/(128k)$, given our assumption on $k$. Applying the union bound over all bad events then yields
  the lemma.
\end{proof}

\subsubsection*{Acknowledgments}
We thank Ishay Haviv for discussions during the early stages of this work.

\bibliography{ghd}
\end{document}